\newtheorem{Definition}{Definition}[section]
\newcommand{\footremember}[2]{%
    \footnote{#2}
    \newcounter{#1}
    \setcounter{#1}{\value{footnote}}%
}
\newcommand{\footrecall}[1]{%
    \footnotemark[\value{#1}]%
} 
\newcommand{\tfour}{\ensuremath{\overrightarrow{\Theta_4}}}
\newcommand{\tkay}{\ensuremath{\overrightarrow{\Theta_k}}}
\newcommand{\tsix}{\ensuremath{\overrightarrow{\Theta_6}}}
\newcommand{\pathf}[1]{\ensuremath{\mathcal{#1}}}
\newcommand{\T}{T}
\newcommand{\pathFG}[2]{\pathf{P}_{#1}\subP{#2}}
\newcommand{\pathFGA}[3]{\pathf{P}_{#1}^{#2}\subP{#3}}
\newcommand{\subP}[1]{\ensuremath{(#1)}}
\newcommand{\pathF}[1]{\pathf{P}\subP{#1}}
\newcommand{\ov}[1]{\bar{#1}}
\begin{document}

\title{On the Spanning and Routing Ratio of Directed Theta-Four\footnote{A preliminary version of this paper appeared in the proceedings of SODA 2019} \thanks{This work was partially supported by the Natural Sciences and Engineering Research Council of Canada (NSERC).}}

\author{Prosenjit Bose\footremember{carleton}{Carleton University} \and Jean-Lou De Carufel\footremember{uofo} {University of Ottawa} \and Darryl Hill\footrecall{carleton} \and Michiel Smid\footrecall{carleton} }

\date{}
\maketitle

	\begin{abstract}		
		We present a routing algorithm for the directed $\Theta_4$-graph, here denoted as the \tfour-graph, that 
		computes a path between any two vertices $s$ and $t$ having length 
		at most $17$ times the Euclidean distance between $s$ and $t$. 
		To compute this path, at each step, the algorithm only uses knowledge of the location of the current
		vertex, its (at most four) outgoing edges, the destination vertex, 
		and one additional bit of information in order to determine the next edge to follow. This provides the first known online, local, competitive
		routing algorithm with constant routing ratio for the $\Theta_4$-graph, as well as 
		improving the best known upper bound on the spanning ratio of these graphs from $237$ to $17$. We also show that without this additional bit of information, the routing ratio increases to $\sqrt{290} \approx 17.03$.
	\end{abstract}

	\section{Introduction}

Finding a path in a graph is a fundamental problem in computer science. Typically, algorithms that compute paths in graphs have at their disposal knowledge of the whole graph. The problem of finding a path in a graph is more difficult in the online setting, when the routing algorithm must explore the graph as it attempts to find a path. Moreover, the situation is even more challenging if the routing algorithm only has a constant amount of working memory, i.e. it can only remember a constant size subgraph of the portion of the graph it has explored. Specifically, an online routing algorithm attempting to find a path from one vertex to another is called {\em local} if at each step, the only information it can use to make its forwarding decision is the location of the current vertex and its neighbouring vertices, plus a constant amount of additional information. 

For a routing algorithm $A$ and a given graph $G$ from a class $\mathcal{G}$ of graphs, let $\pathFGA{G}{A}{s,t}$ be the path in $G$ found by $A$ from $s$ to $t$. The class of graphs we focus on are a subclass of weighted geometric graphs. 
A weighted geometric graph $G=(P,E)$ is a graph whose vertex set is a set $P$ of points in the plane and a set $E$ of (directed or undirected) edges between pairs of points, where the weight of an edge $(p,q)$ is equal to the Euclidean distance $L_2(p,q)$ between its endpoints (i.e., distance in the $L_2$-metric). For a pair of vertices $s$ and $t$ in $P$, let $\pathFG{G}{s,t}$ be the shortest path from $s$ to $t$ in $G$, and let $L_2(\pathFG{G}{s,t})$ be the length of $\pathFG{G}{s,t}$ with respect to the $L_2$-metric, i.e., the sum of the lengths of the edges of $\pathFG{G}{s,t}$. The spanning ratio of a graph $G$ is the minimum value $c$ such that $L_2(\pathFG{G}{s,t})\leq c \cdot L_2(s,t)$ over all pairs of points $s$ and $t$ in $G$. A graph is called a \emph{$c$-spanner}, or just a \emph{spanner}, if its spanning ratio is at most some constant $c$. The routing ratio of a local online routing algorithm $A$ on $\mathcal{G}$ is the maximum value $c'$ such that $L_2(\pathFGA{G}{A}{s,t})\leq c'\cdot L_2(s,t)$ for all $G\in \mathcal{G}$ and all pairs $s$ and $t$ in $G$. When $c'$ is a constant, such an algorithm is called \emph{competitive} on the class $\mathcal{G}$. Note that the routing ratio on a class of graphs $\mathcal{G}$ is an upper bound on the spanning ratio of $\mathcal{G}$, since the routing ratio proves the existence of a bounded-length path. 
	
	\subsection{$\Theta$-graphs}
	
	Let $k \geq 3$ be an integer and for each $i$ with $0 \leq i < k$, 
	let $\mathcal{R}_i$ be the ray emanating from the origin that makes
	an angle of $2 \pi i /k$ in the counter-clockwise direction measured from the negative $y$-axis. Let $\mathcal{R}_k = \mathcal{R}_0$. The $\Theta_k$-graph of a given set $P$	of points is the directed graph that is obtained in the following way. The vertex set is the set $P$. Each vertex $v$ has at most $k$ 
	outgoing edges: For each $i$ with $0 \leq i < k$, let $\mathcal{R}_i^v$ be the 
	ray emanating from $v$ parallel to $\mathcal{R}_i$. Let $C_i^v$ be the cone consisting of all 
	points in the plane that are strictly between the rays $\mathcal{R}_i^v$
	and $\mathcal{R}_{i+1}^v$ or on $\mathcal{R}_{i+1}^v$.  
	If $C_i^v$ contains at least one point of $P \setminus \{v\}$, then 
	let $w_i$ be such a point whose perpendicular projection onto the 
	bisector of $C_i^v$ is closest to $v$ (where closest refers to the 
	Euclidean distance). Then the $\Theta_k$-graph contains the directed 
	edge $(v,w_i)$. See Figure~\ref{construction} for an example with $k=4$. While most of the literature discussed focuses on undirected $\Theta_k$-graphs, and thus $(v,w_i)$ becomes the undirected edge $\{v,w_i\}$, in this paper we will study routing in the directed setting. We will hereafter refer to directed $\Theta_k$-graphs using the $\tkay$ notation.

	$\Theta_k$-graphs were introduced independently by Keil and Gutwin~\cite{ch2keiltheta,ch2keil1992}, and Clarkson~\cite{ch2clarkson1987}. Both papers gave a spanning ratio of $1/(\cos\theta - \sin\theta)$, where $\theta=2\pi/k$ is the angle defined by the cones. Observe this gives a constant spanning ratio for $k\geq9$. Ruppert and Seidel~\cite{ch2ruppert} improved this to $1/(1-2\sin(\theta/2))$, which applies to $\Theta_k$-graphs with $k\geq 7$. Bose et al.~\cite{ch2morenotbetter} give a tight bound of $2$ for $k=6$. In the same paper are the current best bounds on the spanning ratio of a large range of values of $k$. More recently Bose et al.~\cite{conf91} showed that $\tsix$ has a spanning ratio of $7$. For $k=5$, Bose et al.~\cite{theta5arxiv} showed an upper bound on the spanning ratio of $\approx 5.70$. The previous bound by Bose et al.~\cite{ch2sander5} of $\approx 9.96$ also showed a lower bound of $\approx 3.78$. For $k=4$, Barba et al.~\cite{ch2barbat4} showed a spanning ratio of $\approx 237$, with a lower bound of $7$. For $k=3$, Aichholzer et al.~\cite{theta3} showed that $\Theta_3$ is connected, but Molla~\cite{ch2nawarphd} showed that there is no constant $c$ for which $\Theta_3$ is a $c$-spanner. 
	  
	\begin{figure}
		\centering
		\includegraphics[page = 3]{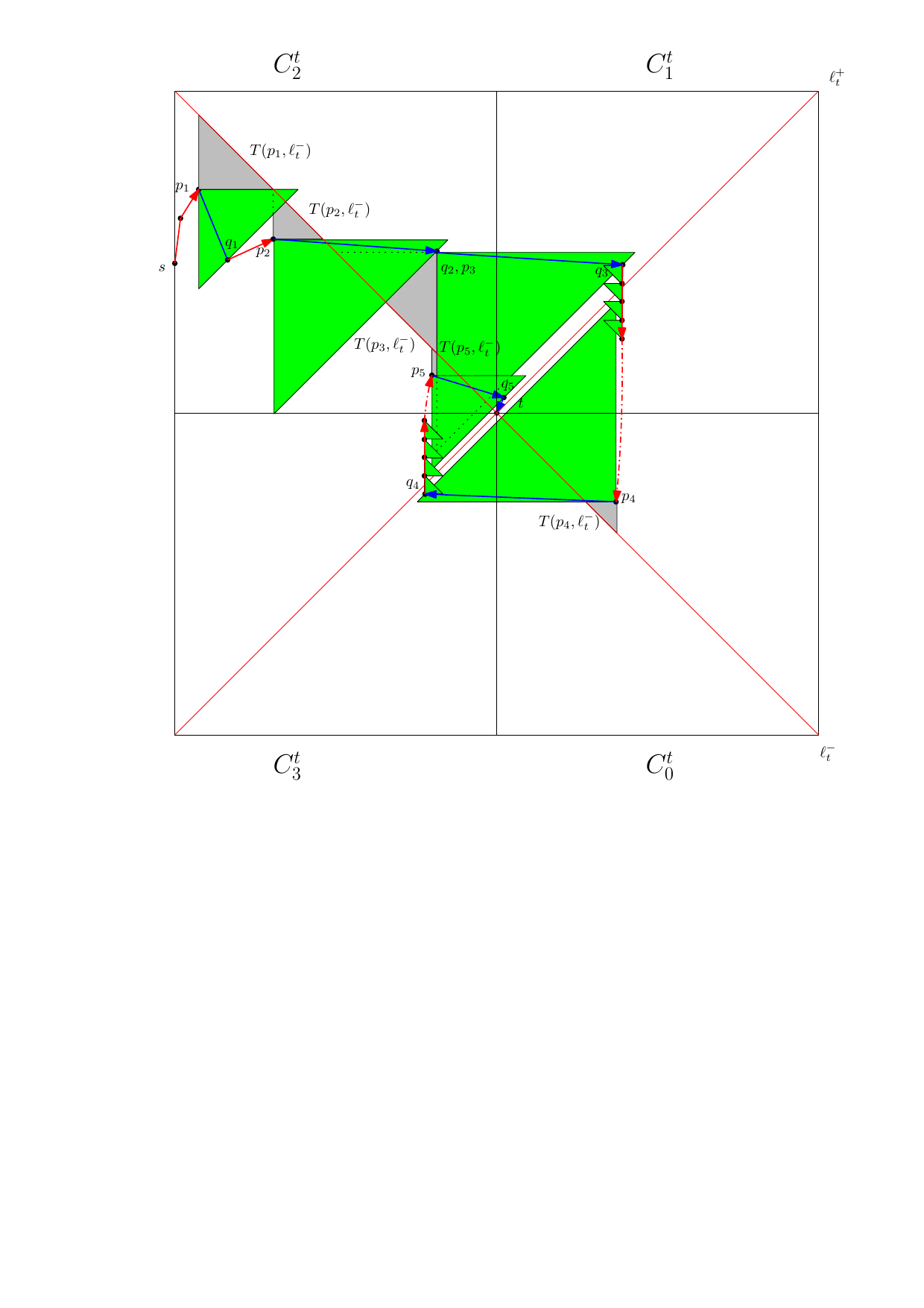}\caption{Neighbours of $v$ in the $\Theta_4$-graph. }\label{construction}
	\end{figure}

\subsection{Local Routing}

Local routing has been studied extensively in variants of the Delaunay graph as well as $\Theta_k$-graphs (see \cite{ch2onlinetriangulation, ch2bcd13, ch2ksu99, ch2chew, ch2chewtd, ch2bfrv15, ch2ruh09, ch2cddm12, ch2bdh16}). Also, more recently there has been interest in routing on \tkay-graphs~\cite{conf91}. There is an intimate connection between $\Theta_k$-graphs and variants of the Delaunay triangulation. For example, the existence of an edge in a $\Theta_k$-graph implies the existence of an empty triangle containing the edge (refer to Figure \ref{construction}). In a Delaunay triangulation, the existence of an edge implies the existence of an empty disk containing the edge (or some empty convex shape when considering variants of the Delaunay graph). Moreover, the Delaunay graph where the empty convex shape is an equilateral triangle (this is often referred to as the TD-Delaunay graph \cite{ch2bonichon-theta}) is a subgraph of the $\Theta_6$-graph.

Chew \cite{ch2chew} proved that the $L_1$-Delaunay graph has bounded spanning ratio by providing a local routing algorithm whose routing ratio is at most $\sqrt{10}$. 
Bose and Morin\cite{ch2bosem04} provided a competitive local routing algorithm that works on triangulations that have the \emph{diamond property}. This includes such graphs as the $L_2$-Delaunay triangulation, the greedy triangulation, and the minimum weight triangulation. Bose and Morin\cite{ch2onlinetriangulation} showed that there are no deterministic routing algorithms that work on any arbitrary graph. This implies that we must pair routing algorithms with particular classes of geometric graphs in order to route competitively. They also provided the first deterministic competitive routing algorithm on the $L_2$-Delaunay graphs. Bonichon et al.~\cite{ch2dtrouting} showed that we could route competitively on the $L_2$-Delaunay triangulation with a routing ratio of around $5.9$, using a generalization of Chew's~\cite{ch2chew} algorithm. This was the best known routing ratio for $L_2$-Delaunay triangulations until Bonichon et al.~\cite{ch2esa18} gave a new algorithm with a routing ratio of $3.56$, which is currently the best known. Bose et al.~\cite{ch2ht6} show that the half-$\theta_6$ graph, which is identical to the TD-Delaunay graph, has a routing ratio of $5/\sqrt{3}$, and this is shown to be tight. Since the spanning ratio of this graph is $2$, it is an example where a local routing algorithm cannot necessarily find the shortest path, and we see a separation between the routing and spanning ratios in this graph.

	For $\Theta_k$-graphs, there is a simple routing algorithm called \emph{cone-routing} or \emph{greedy-routing} that is competitive for $k\geq 7$. To route from a vertex $s$ to a vertex $t$, let $C_i^s$ be the cone of $s$ that contains $t$. Forward the packet from $s$ to its neighbour in $C_i^s$, and repeat this until the destination is reached. Let $\theta = 2\pi/k$, then for $k\geq 7$, Ruppert and Seidel~\cite{ch2ruppert} proved that cone routing gives a routing ratio of $1/(1-2\sin(\theta/2))$. Cone routing also has the advantage of only utilizing outgoing edges, so each vertex only needs to store the location of at most $k$ neighbours. That means these algorithms and results also apply to the \tkay-graphs. For $k<7$, cone-routing does not necessarily give a short path. In fact, Bose, De Carufel and Devillers~\cite{bosejocg20} showed that cone-routing has unbounded routing ratios for $k\leq 6$. However, for $k=6$, Bose et al.~\cite{ch2ht6} show that a different local online routing algorithm gives a routing ratio of $\sqrt{5}/2\approx2.89$. More recently, Bose et al.~\cite{conf91} give a local online routing algorithm for the \tsix-graph with a routing ratio of at most $14$. Prior to this work, there was no known competitive routing algorithm for $k=4$.  
			
	\subsection{Our Results}
	
	In this paper we improve the upper bound of the spanning ratio of \tfour-graphs (and, by extension, $\Theta_4$-graphs) from $237$ to $17$. We do this by providing a local online routing algorithm with a routing ratio of at most $17$. This is the first local routing algorithm for \tkay-graphs or $\Theta_k$-graphs for $k=4$, bringing us one step closer to obtaining competitive routing strategies on all \tkay- and $\Theta_k$-graphs with $k>3$. Our algorithm is slightly counter-intuitive since it sometimes takes a step in a cone that does not contain the destination. This is different from cone-routing that always takes a step in the cone that contains the destination. The algorithm is simple, and only uses knowledge of the destination vertex, the current vertex $v$, the neighbours of $v$, and one bit of additional information. If we forgo that bit of information, then the routing ratio increases to $\sqrt{290} \approx 17.03$. Additionally, like cone-routing, we route using only outgoing edges, so each vertex only needs to store the location of its at most 4 outgoing neighbours.  For the remainder of the paper, all edges $(u,v)$ are considered directed outgoing edges from $u$ to $v$, and when we refer to the neighbour $v$ of a vertex $u$ in a cone $C_i^u$, we are referring to the outgoing edge $(u,v)$ of $u$.

	The rest of the paper is organized as follows. Section \ref{three} gives the details of the routing algorithm that is used to navigate the $\tfour$-graph. In Section \ref{four} we analyze the length of the path found by the algorithm, and show an upper bound of $17$ on the routing ratio.  In Section \ref{five} we give an example of a path that shows this approach cannot do any better than a routing ratio of $17$. In Section \ref{four.five} we show how routing with only knowledge of the destination vertex increases the routing ratio to $\sqrt{290} \approx 17.03$. Section \ref{six} concludes the paper and gives some directions for future work.

\section{Algorithm}\label{three}

\begin{figure}
	\begin{subfigure}[b]{0.4\textwidth}
		\centering
		\includegraphics[page = 19, scale = 1]{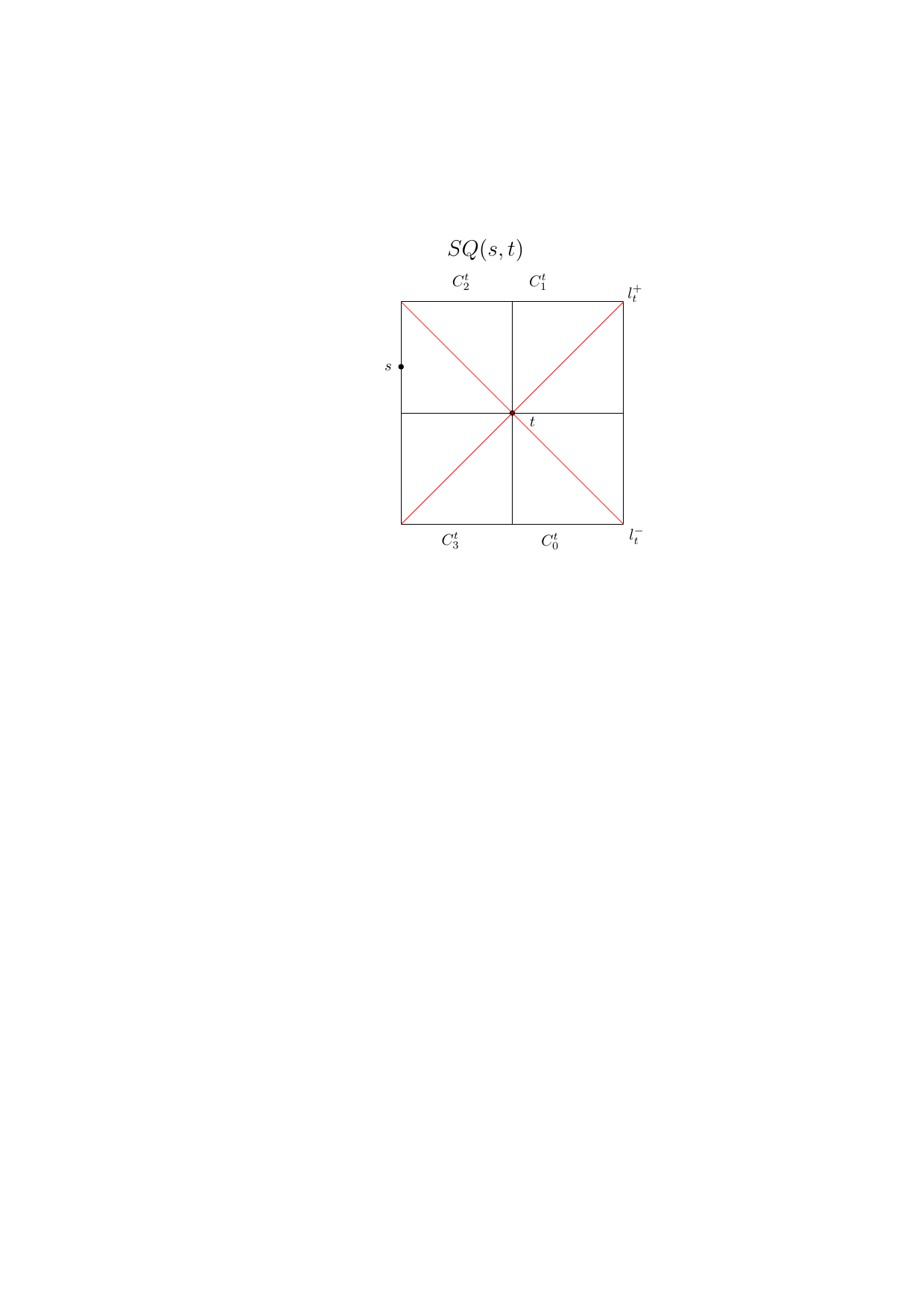}\caption{Two examples of sweeping steps towards $\ell_t^-$.}\label{sweep}
	\end{subfigure}
\hfill
	\begin{subfigure}[b]{0.4\textwidth}
		\centering
		\includegraphics[page = 20, scale = 1]{pics/triangle.pdf}\caption{A greedy step towards $t$.\textcolor{white}{a little fill to align the pictures}}\label{greedy}
	\end{subfigure}
	\caption{}
\end{figure}

\begin{figure}
	\centering
	\includegraphics[page = 21, scale = 1]{pics/triangle.pdf}\caption{Vertices $u_1,u_2,u_3,$ and $u_4$ are all clean with respect to $\ell_t^-$.}\label{clear}
\end{figure}
	In this section, we present our 17-competitive local online routing algorithm on \tfour-graphs. We first introduce some concepts and notation related to the \tfour-graph. We then define the routing model, and finally we describe the routing algorithm in detail.
	\subsection{Preliminaries}
	Let $t$ be an arbitrary point in the plane, and let $\ell_t^-$ be the line through $t$ with slope $-1$. Similarly let $\ell_t^+$ be the line through $t$ with slope $1$. We refer to these as the \emph{diagonals} of $t$. Examples can be seen in Figures \ref{sweep}, \ref{greedy}, and \ref{clear}. To ease our analysis and avoid tedious tie-breaking, we make a general position assumption that no two vertices have the same $x$- or $y$-coordinates, and no two vertices lie on a common diagonal. Let $t$ and $u$ be arbitrary vertices and consider a diagonal of $t$. Without loss of generality, we consider the diagonal $\ell_t^-$ and assume that $u$ is in the half-plane below $\ell_t^-$. Let $\mathcal{R}_i^u$ and $\mathcal{R}_{i+1}^u$ be the rays emanating from $u$ that intersect $\ell_t^-$. Recall that $\mathcal{R}_i^u$ and $\mathcal{R}_{i+1}^u$ delineate the cone $C_i^u$. Let the triangle $T(u,\ell_t^-)$ be the intersection of the halfplane of $\ell_t^-$ containing $u$ and the cone $C_i^u$. We say that $C_i^u$ \emph{faces} $\ell_t^-$. If $T(u,\ell_t^-)$ is empty of vertices (not including $t$), then we say that $u$ is \emph{clean with respect to $\ell_t^-$}. See Figure \ref{clear}. If the diagonal we are referring to is clear from the context, we simply say that $u$ is \emph{clean}. If $u$ is not clean with respect to $\ell_t^-$, then let $v$ be the vertex in $T(u,\ell_t^-)$ for which $(u,v)$ is an edge in the $\tfour$-graph. We will refer to following the edge from $u$ to $v$ as \emph{taking a sweeping step towards} $\ell_t^-$. (See Figure~\ref{sweep}.) Let $i$ be the index such that the vertex $t$ is in the cone $C_i^u$.	Let $v$ be the vertex in $C_i^u$ for which $(u,v)$ is an edge in the $\tfour$-graph. We will refer to following the edge from $u$ to $v$ as \emph{taking a greedy step towards} $t$. (See Figure~\ref{greedy}.)  Note that when routing with respect to $\ell_t^-$ (respectively $\ell_t^+$) and the current vertex $v$ is in $C_3^t$ or $C_1^t$ (respectively $C_0^t$ or $C_2^t$), a greedy step towards $t$ and a sweeping step towards $\ell_t^-$ (respectively $\ell_t^+$) are the same. However, by our definition of \emph{clean} and to disambiguate the analysis, this step is defined as a sweeping step. 
	
	\subsection{Routing Model}
	
	An online local routing algorithm takes as input $u,t,N(u),m$ where $u$ is the current vertex, $t$ is the target vertex, $N(u)$ are the (1-hop) neighbours of $u$ in $G$, and $m=\{0,1\}^*$ is a bitstring of memory. The algorithm returns a vertex $v\in N(u)$ on the path from $u$ to $t$ and updates $m$ if necessary. The maximum length of $m$ over all steps of the algorithm represents the memory requirements of the algorithm. If the maximum length of $m$ is $0$ we say it is a \emph{memoryless} algorithm. That is, the algorithm does not require any knowledge of the previous vertices, including the start vertex. The strongest version of this routing model uses unlimited memory, while the weakest version of this routing model is memoryless. 
	
	The local $Greedy/Sweep$ routing algorithm that we define here has two versions, one that is a memoryless with a routing ratio of $\sqrt{290}\approx 17.03$ and one that uses $1$ bit of memory and has a routing ratio of $17$. We describe the $1$ bit routing in this section. We show how to get memoryless routing with a small modification to our algorithm in Section \ref{four.five}.

\subsection{The $Greedy/Sweep$ Algorithm}
 
	We now define the $1$-bit version of the $Greedy/Sweep$ algorithm. Let $s$ be the source vertex and $t$ the target vertex. The algorithm first chooses a diagonal of $t$ as follows: If $s \in C_0^t\cup C_2^t$, the algorithm chooses $\ell_t^-$, otherwise it chooses $\ell_t^+$. Intuitively the algorithm chooses the diagonal of $t$ ``closer" to $s$, and $d$ encodes this information about the location of $s$. Our routing algorithm is then denoted by $Greedy/Sweep(u,t,N(u),d)$ where $u$ is the current vertex, $t$ is the target vertex, $N(u)$ are the ($1$-hop) neighbours of $u$ and $d\in\{0,1\}$ is the memory representing the chosen diagonal of $t$. That is, $d=0$ corresponds to $\ell_t^-$ and $d=1$ corresponds to $\ell_t^+$. For the purposes of simplification, we will express $d$ as the diagonal directly, that is, we will assume that $d\in\{\ell_t^-,\ell_t^+\}$. 
	
	The $Greedy/Sweep(u,t,N(u),d)$ algorithm uses three other ``helper" algorithms to determine the output vertex $v$, which we will define here. Let $v \in N(u)$ be the vertex in the cone of $u$ that faces $d$. Note this implies that $(u,v)$ is an edge in $\tfour$. Then $Clean(u,t,N(u),d)$ returns True if $(u,v)$ crosses $d$ and False otherwise.  $Sweep(u,t,N(u),d)$ returns the vertex $v$ such that $(u,v)$ is the edge in the cone of $u$ facing $\ell_t^-$. $Greedy(u,t,N(u))$ returns the vertex $v$ such that $(u,v)$ is the edge in the cone of $u$ that contains $t$. Then the algorithm $Greedy/Sweep(u,t,N(u),d)$ is described in Algorithm \ref{algo}.
	
	\begin{algorithm}
\SetAlgoLined
\SetKwInOut{Input}{Input}
\Input{$u$ is the current vertex;\newline
$t$ is the target vertex;\newline
$N(u)$ are the neighbours of $u$;\newline
$d\in\{\ell_t^-,\ell_t^+\}$ is the chosen diagonal of $t$;}
\setstretch{1.2}
\vspace{0.1cm}\hrule\vspace{0.1cm}
 \lIf{$u=t$}{END}\vspace{0.1cm}
  \uIf{$Clean(u,t,N(u),d)$}{
   return $Greedy(u,t, N(u))$\;
   }\uElse{
   return $Sweep(u,t,N(u),d)$\;
  }
 \caption{$Greedy/Sweep(u,t,N(u),d)$}\label{algo}
\end{algorithm}

 Figure \ref{fig:example1} gives an example of a path from $s$ to $t$ computed by Algorithm \ref{algo}.

\section{Analysis}\label{four}

\begin{figure}   
 		\centering
 		\includegraphics[page =4, width= 8cm]{pics/base-square.pdf}\caption{An example of the algorithm showing canonical triangles and cleaned triangles. Blue lines are greedy steps and red lines are sweeping steps.}\label{fig:example1}
 	\end{figure}
  
 	\begin{figure}
 		\centering
 		\includegraphics[page =5, width= 8cm]{pics/base-square.pdf}\caption{The bounding triangles. Blue lines are greedy steps towards $t$, red lines are sweeping steps towards $\ell_t^-$.}\label{fig:example2}

 \end{figure}

In this section, we prove that our routing algorithm terminates and that it has a routing ratio of 17. Without loss of generality, we assume that $s$ is in $C_2^t$ under $\ell_t^-$. Thus $\ell_t^-$ is the closest diagonal of $t$ to $s$. For two arbitrary points $u$ and $v$, let $d_x(u,v)$ and $d_y(u,v)$ be the distance between them along the $x$-axis and $y$-axis respectively. Let $L_1(u,v)$ be the $L_1$ distance between $u$ and $v$ (i.e., $L_1(u,v)= d_x(u,v)+d_y(u,v)$), and let $L_\infty(u,v)$ be the $L_\infty$ distance from $u$ to $v$ (i.e., $L_\infty(u,v) = \max\{d_x(u,v),d_y(u,v)\}$). To simplify our analysis, most of our intermediate measurements will be in the $L_1$-metric. In the final analysis we will express the length in the $L_2$-metric. Let $\pathF{s,t}$ be the sequence of directed edges produced by our algorithm. For vertices $u$ and $v$ in $\pathF{s,t}$, with $u$ occurring before $v$, let $\path{u,v}$ be the subpath of $\pathF{s,t}$ from $u$ to $v$.

We divide the area around $t$ into \emph{quadrants}. The \emph{Northern} quadrant is the area above $\ell_t^-$ and $\ell_t^+$, while the \emph{Southern} quadrant is the area below $\ell_t^-$ and $\ell_t^+$. The \emph{Western} quadrant is the area to the left of $\ell_t^-$ and $\ell_t^+$, while the \emph{Eastern} quadrant is the area to the right of $\ell_t^-$ and $\ell_t^+$. 

 We note the following about the path found by the algorithm:
\begin{lemma}\label{linfty}
	Let $u$ and $v$ be two consecutive vertices on $\pathF{s,t}$. Then $L_\infty(u,t)> L_\infty(v,t)$.
\end{lemma}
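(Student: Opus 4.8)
The plan is to do a case analysis on which of the two moves the algorithm makes at $u$, and in each case show that neither the $x$-coordinate nor the $y$-coordinate distance to $t$ can increase past the current value of $L_\infty(u,t)$. Recall the normalizing assumptions: $t=(0,0)$, the cones are axis-aligned, and $s$ (hence, as we will want to maintain, every vertex on the path) lies in $C_2^t$ below $\ell_t^-$, i.e. in the Southern quadrant or on its boundary with the Western quadrant. So write $u=(u_x,u_y)$ with $u_x\le 0$, $u_y\le 0$, and $|u_y|\ge|u_x|$ (this last inequality is exactly what ``below $\ell_t^-$'' means), so that $L_\infty(u,t)=|u_y|=-u_y$. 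The claim to prove is then simply $-u_y\ge L_\infty(v,t)$.

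First, suppose the algorithm takes a greedy step from $u$ towards $t$. Since $t\in C_2^u$, the cone $C_2^u$ is the upward cone of $u$ (the one whose bisector is the positive $y$-axis direction from $u$), bounded by the two rays of slope $\pm 1$ through $u$. The new vertex $v$ is the $\Theta_4$-neighbour of $u$ in $C_2^u$, so $v$ lies in this cone: $v_y\ge u_y$, and $|v_x-u_x|\le v_y-u_y$. Moreover, because this is the neighbour in the cone, the perpendicular projection of $v$ onto the bisector (the vertical line through $u$) is no farther than that of $t$; since $t=(0,0)$ this says $v_y-u_y\le 0-u_y=-u_y$, hence $v_y\le 0$ and $|v_y|=-v_y\le -u_y=L_\infty(u,t)$. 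For the $x$-coordinate, $|v_x|\le |u_x|+|v_x-u_x|\le |u_x|+(v_y-u_y)$; I would bound this using $|u_x|\le -u_y$ and $v_y\le 0$ to get $|v_x|\le -u_x - u_y +v_y\le$ something at most $-u_y$ — the cleanest route is to note $v$ stays within the cone $C_2^u$ and within the slab $u_y\le v_y\le 0$, and a cone of half-angle $\pi/4$ truncated at height $0$ has all its points within $L_\infty$-distance $-u_y$ of the apex-projection; combined with $|u_x|\le -u_y$ this gives $L_\infty(v,t)\le -u_y$. (I should double-check the corner cases where $v_x$ could in principle overshoot, but since the cone opens at $45^\circ$ and we stop at the level of $t$, $|v_x|\le |u_x| + (v_y-u_y) \le (-u_y) + (-u_y)$ is too weak; the right observation is that $t$ itself is in $C_2^u$, so the defining minimality of $v$'s projection forces $v$ into the part of the cone at or below $t$'s level, and there $|v_x-u_x| \le v_y - u_y \le -u_y$, while also $v$ being in $C_2^u$ with $v_y\le0$ forces $|v_x|\le -v_y \le -u_y$ directly — that last inequality is the clean one to use.)

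Second, suppose the algorithm takes a sweeping step from $u$ towards $\ell_t^-$. By the setup, $u$ is not clean with respect to $\ell_t^-$, the triangle $T(u,\ell_t^-)$ is the intersection of the open halfplane below $\ell_t^-$ with the cone $C_1^u$, and $v$ is the $\Theta_4$-neighbour of $u$ in $C_1^u$ that lies inside this triangle. Being in $C_1^u$ (the rightward cone of $u$, bounded by rays of slope $\pm1$) gives $v_x\ge u_x$ and $|v_y-u_y|\le v_x-u_x$; being below $\ell_t^-$ gives $v_x+v_y<0$, and since $v$ is in $C_2^t$ territory we still have $v_x\le 0$ and $|v_y|\ge|v_x|$, i.e. $L_\infty(v,t)=-v_y$. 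Now $v_y\ge u_y$ is immediate from $v\in C_1^u$ together with $v_x\ge u_x$? — not quite; instead, $v\in C_1^u$ gives $v_y\ge u_y-(v_x-u_x)$, which is a lower bound on $v_y$ that could be below $u_y$. The point I actually need is the upper bound $-v_y\le -u_y$, i.e. $v_y\ge u_y$: this does follow because $v\in C_1^u$ means $v$ lies in the cone whose lower boundary has slope $-1$, so $v_y \ge u_y - (v_x - u_x)$; hmm, that is the wrong sign. Let me instead use that $T(u,\ell_t^-)\subseteq C_1^u$ and that $C_1^u$'s bisector is horizontal: the neighbour $v$ in $C_1^u$ has horizontal-projection minimal, but more usefully, every point of $T(u,\ell_t^-)$ has $y$-coordinate in the range $(u_y - \text{(something)}, \ldots)$ — the hard part here is precisely pinning down that a sweeping step does not decrease the $y$-coordinate, equivalently does not increase $L_\infty(\cdot,t)$ in the Southern quadrant.

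The main obstacle I anticipate is exactly this sweeping-step case: I must verify that $v$, the $\Theta_4$-edge endpoint inside the truncated triangle $T(u,\ell_t^-)\subseteq C_1^u$, satisfies $v_y\ge u_y$ and $|v_x|\le |u_y|$ so that $L_\infty(v,t)=-v_y\le -u_y=L_\infty(u,t)$. The first inequality should come from the geometry of $C_1^u$ intersected with the halfplane below $\ell_t^-$: any point $p$ of $C_1^u$ with $p$ below $\ell_t^-$ has $p_y \ge u_y$? This fails in general for $C_1^u$ alone, so the truncation by $\ell_t^-$ and the sign constraints $u_x\le0\le$ must be doing the work — I would draw the triangle carefully, note its three sides (the two cone rays of slope $\pm1$ from $u$, and the segment of $\ell_t^-$), and check that its lowest vertex is $u$ itself, giving $p_y\ge u_y$ for all $p$ in the triangle, hence $v_y\ge u_y$; and its extreme $x$-values are likewise controlled by where $\ell_t^-$ cuts the cone, which lies within $|x|\le|u_y|$ because $u$ is below $\ell_t^-$. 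Once those two coordinate bounds are established in both cases, the lemma follows by combining them: $L_\infty(v,t)=\max\{|v_x|,|v_y|\}\le -u_y = L_\infty(u,t)$.
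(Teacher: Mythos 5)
There is a genuine gap, and it starts with the setup: you have the cone geometry backwards. In this paper the cones of a vertex are the four \emph{axis-aligned} quadrants (their boundaries are the horizontal and vertical rays through the apex, and their bisectors are the slope-$\pm 1$ directions), while the lines $\ell_t^-,\ell_t^+$ and the quadrants N/S/E/W are the ones at $45^\circ$. You instead take $C_2^u$ to be a cone bounded by the slope-$\pm1$ rays with a vertical bisector, and accordingly read ``$s\in C_2^t$ under $\ell_t^-$'' as the Southern quadrant with $u_y\le 0$ and $L_\infty(u,t)=-u_y$; in fact that region is the upper half of the \emph{Western} quadrant, where $u_y\ge 0$ and $L_\infty(u,t)=d_x(u,t)$. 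This is not a harmless relabeling: the whole point of the lemma is the $45^\circ$ offset between the cones and the diagonals of $t$, and your model destroys it. The symptom is exactly the place you flag as the main obstacle: in your model the ``triangle'' $T(u,\ell_t^-)$ would be bounded by two rays of slope $\pm1$ and a segment of the slope-$-1$ line $\ell_t^-$, which is impossible (two of those are parallel); the region is in fact unbounded there, and the inequality $v_y\ge u_y$ you need for the sweeping step is simply false in that model. So the sweeping case is not just ``left to check'' --- it cannot be completed as set up. (A second, smaller issue: you implicitly assume every vertex of the path stays in the initial region of $t$; the path can visit all four quadrants, so the lemma must be proved by a case analysis on where $u$ lies, one step at a time, with a symmetry reduction.)

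With the correct cones the argument is short and is what the paper does: take $u$ (WLOG, by symmetry) in the Western quadrant and in $C_2^t$, so $L_\infty(u,t)=d_x(u,t)$. A sweeping step goes to the neighbour $v$ in the axis-aligned cone $C_1^u$ lying below $\ell_t^-$; membership in $C_1^u$ alone gives $d_x(v,t)\le d_x(u,t)$, and being below $\ell_t^-$ and above $u$ gives $d_y(v,t)<d_x(v,t)$, so $L_\infty(v,t)\le L_\infty(u,t)$. A greedy step goes to the neighbour in $C_0^u$ (the cone of $u$ containing $t$); either $v$ is below $\ell_t^-$, in which case $L_\infty(v,t)=d_x(v,t)\le d_x(u,t)$, or $v$ is above $\ell_t^-$, in which case $L_\infty(v,t)=d_y(v,t)\le d_y(u,t)<d_x(u,t)$. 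Your greedy-step discussion gestures at the projection-minimality of $v$ relative to $t$, which is a valid ingredient, but it is entangled with the wrong bisector direction, so I cannot count either case as established. I recommend redrawing the figures with the correct (axis-bounded) cones; once that is done, both cases become one-line coordinate comparisons.
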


\begin{proof}
	
	Note that we can without loss of generality  assume that $u$ is a vertex in the Western quadrant and that we are routing with respect to $\ell_t^-$ by rotating the point set around $t$ or by flipping the point set along $\ell_t^-$. Since $u$ is below $\ell_t^-$ and above $\ell_t^+$, $L_\infty(u,t) = d_x(u,t)$. If $v$ is in the Western quadrant, observe that for both a greedy step and a sweeping step, $d_x(u,t)>d_x(v,t)$ since we assume no two vertices have the same $x$-coordinate. This implies that $L_\infty(u,t)> L_\infty(v,t)$, as required.
	
	Assume $v$ is not in the Western quadrant. For $v$ to be in the Eastern quadrant, $(u,v)$ must cross both diagonals of $t$. Observe that a greedy step does not cross $\ell_t^+$, while a sweeping step does not cross $\ell_t^-$. Thus $v$ can only be in the Northern or Southern quadrant, and $L_\infty(v,t)= d_y(v,t)$. Observe that if $v$ is in the Southern quadrant, then $(u,v)$ was a sweeping step, and if $v$ is in the Northern quadrant, then $(u,v)$ was a greedy step. In both cases, $d_y(u,t)>d_y(v,t)$ since we assume no two vertices have the same $y$-coordinate. Thus  $L_\infty(u,t)= d_x(u,t)>d_y(u,t)>d_y(v,t)=L_\infty(v,t)$, as required.  See Figs. \ref{sweep} and \ref{greedy}.
  \end{proof}
  
Since there are a finite number of vertices, this leads to the following corollary.

\begin{corollary} 
	 The $Greedy/Sweep$ algorithm terminates, i.e., it reaches $t$. 
\end{corollary}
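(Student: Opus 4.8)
The plan is to combine Lemma~\ref{linfty} with the fact that Algorithm~\ref{algo} is deterministic. Since $L_\infty(\cdot,t)$ is non-increasing along $\pathF{s,t}$, the whole path stays inside the axis-aligned square $S=\{p : L_\infty(p,t)\le L_\infty(s,t)\}$ centred at $t$, and $S$ contains only finitely many points of $P$. The successor of any vertex $v\neq t$ is uniquely determined by $v$, by $t$, and by the one fixed choice of diagonal (the sweeping step goes to the unique $\Theta_4$-neighbour of $v$ in $C_1^v$ when that neighbour lies below $\ell_t^-$, and the greedy step to the unique $\Theta_4$-neighbour of $v$ in the cone containing $t$), so the vertex sequence is obtained by iterating a single map on a finite set. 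Hence, if the algorithm never reaches $t$, the path is eventually periodic and contains a cycle $v_0\to v_1\to\dots\to v_m=v_0$ with $m\ge 1$ and $v_i\neq t$ for all $i$. Applying Lemma~\ref{linfty} around the cycle gives $L_\infty(v_0,t)\ge L_\infty(v_1,t)\ge\dots\ge L_\infty(v_m,t)=L_\infty(v_0,t)$, so $L_\infty(v_i,t)=R$ for all $i$, where $R:=L_\infty(v_0,t)>0$; in particular every cycle vertex lies on $\partial S$.

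It then remains to rule out such a cycle. I would re-run the case analysis in the proof of Lemma~\ref{linfty} under the added hypothesis that the step $v_i\to v_{i+1}$ is \emph{flat}, i.e.\ $L_\infty(v_i,t)=L_\infty(v_{i+1},t)$. Inspecting the sweeping and greedy cases shows that equality can hold only when $v_{i+1}$ lies directly above, below, left, or right of $v_i$, so that a flat step moves the current vertex parallel to a coordinate axis and keeps it on the same side of $\partial S$; one then checks that no flat step can carry the current vertex across a corner of $S$ (a step leaving a corner either continues towards $t$ or lands strictly inside $S$, and thus strictly decreases $L_\infty$). Consequently the cycle would be confined to the relative interior of one side of $\partial S$ and would have to move monotonically along it, which is impossible, giving the desired contradiction and hence termination.

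The main obstacle is exactly this equality analysis, and inside it the treatment of degenerate placements — points lying on a cone ray $\mathcal{R}_i^v$, where the $\Theta_4$ tie-breaking matters, and the four corners of $S$. There one must invoke the $\Theta_4$ edge-selection rule (the chosen neighbour in a cone is the one whose projection onto the cone's bisector is closest to $v$) together with the definition of a sweeping step (non-emptiness of $T(v,\ell_t^-)$) to pin down where such boundary points are sent. If one assumes general position (no two points of $P$ sharing an $x$- or $y$-coordinate and no point on a $45^\circ$ line through another), this difficulty vanishes: every sweeping and every greedy step then strictly decreases $L_\infty(\cdot,t)$, no flat step exists, and termination follows at once from the finiteness of $P$. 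I would therefore present the general-position argument first and, if the paper does not adopt that assumption, handle the degenerate steps separately using the fixed tie-breaking convention.
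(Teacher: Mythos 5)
The paper does not actually write out a proof of this corollary: it treats termination as immediate from Lemma~\ref{linfty}, implicitly relying on the strict inequalities asserted inside that lemma's proof (e.g.\ $d_x(u,t)>d_x(v,t)$ for a greedy step), which in turn presuppose the usual general-position convention that no point lies on a cone boundary (equivalently, on a vertical, horizontal, or $45^\circ$ line) through another point. Your general-position branch is exactly this argument and is correct: under that assumption every sweeping and every greedy step strictly decreases $L_\infty(\cdot,t)$, no vertex can repeat, every vertex $v\neq t$ has an outgoing step, and finiteness of $P$ forces the walk to reach $t$. To that extent your proposal matches (and usefully makes explicit) what the paper leaves implicit.

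The gap is in the branch you present as the main argument, the analysis of a ``flat'' cycle on $\partial S$. The claim that such a cycle ``would be confined to one side of $\partial S$ and would have to move monotonically along it'' is unsupported and is in fact false under the paper's literal definitions: on the western side $x=-R$, a flat sweeping step moves the current vertex \emph{up} (towards $\ell_t^-$) while a flat greedy step moves it \emph{down} (its target lies in $C_0$ of the current vertex), so up--down alternation is not excluded. Concretely, with the closed cones as defined in the paper, take $t=(0,0)$, $u=(-1,\tfrac12)$, $v=(-1,-\tfrac12)$, $s=u$: then $u$ is clean and its greedy edge in $C_0^u$ goes to $v$ (the projection of $v$ onto the bisector of $C_0^u$ is $1/\sqrt2$, closer than $1.5/\sqrt2$ for $t$), while $v$ is not clean ($u\in T(v,\ell_t^-)$) and its sweeping edge in $C_1^v$ goes back to $u$; the algorithm oscillates between $u$ and $v$ forever. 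In this configuration $t$ is not even reachable from $u$ in the $\Theta_4$-graph, so no amount of case-checking of degenerate flat steps can rescue termination --- such inputs must be excluded by a general-position assumption or a half-open-cone/tie-breaking convention, neither of which the paper states but which its strict inequalities tacitly assume. So the correct repair is the one you offer as a fallback: state the general-position hypothesis explicitly and conclude via strict decrease of $L_\infty(\cdot,t)$ plus finiteness, discarding the flat-cycle analysis.
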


 Let $((p_1,q_1),(p_2,q_2),$ $...,(p_{m-1},q_{m-1}))$ be the sequence of edges produced by greedy steps of the algorithm, with $t=p_m$. A \emph{phase} $f_i$ of the algorithm refers to the path from a vertex $p_i$ to a vertex $p_{i+1}$ consisting of $(p_i,q_i)+\path{q_i,p_{i+1}}$, for $1\leq i <m$. That is, a phase consists of a single greedy step followed by a (possibly empty) sequence of sweeping steps. Note that the first phase is preceded by a (possibly empty) sequence of sweeping steps from $s$ to $p_1$. Let $L_2(f_i)= L_2(p_i,q_i)+L_2(\path{q_i,p_{i+1}})$ represent the length of phase $f_i$. Then observe that $\pathF{s,t} = \path{s,p_1}+\sum_{i=1}^{m-1} f_i$, where the $+$ operator on paths is concatenation of the paths. 
 Note that if each vertex on $\path{u,v}$ is in the same cone $i, 0\leq i\leq 3$ of all preceding vertices, then $\path{u,v}$ is $x$- and $y$-monotone, and $L_2(\path{u,v})\leq L_1(u,v)$. This implies that $L_2(\path{q_i,p_{i+1}})\leq L_1(q_i,p_{i+1})$, for all $1\leq i<m-1$.
 
 Let $v$ be the neighbour of an arbitrary vertex $u$ in the cone $C_i^u$. Let the \emph{canonical triangle} $T_{uv}$ be the triangle formed by the boundaries of $C_i^u$ and the line through $v$ perpendicular to the bisector of $C_i^u$. Note that the existence of $(u,v)$ guarantees that $T_{uv}$ is empty of vertices in its interior. See Figure \ref{greedy}.

\begin{figure}
	\begin{subfigure}[b]{0.4\textwidth}
		\includegraphics[page = 5, scale = 1]{pics/triangle.pdf}
	\end{subfigure}
	\hfill
	\begin{subfigure}[b]{0.4\textwidth}
		\includegraphics[page = 6]{pics/triangle.pdf}
	\end{subfigure}
	\caption{$T_{p_iq_i}$ does not intersect $\ell_t^-$.}\label{can1.1}
\end{figure}

\begin{Definition}
	Consider the edge $(p_i,q_i)$ of $f_i$. If $p_i$ is in the Northern or Southern quadrant, then let $\mathcal{L}$ be the horizontal line through $p_i$, otherwise $\mathcal{L}$ is the vertical line through $p_i$. Let the \emph{bounding triangle} $\T_i$ be the triangle formed by the lines $\ell_t^-$, $\mathcal{L}$, and $\ell_{q_i}^+$. See Figs. \ref{can1.1}, \ref{can1.2}, and \ref{can1.3}.
\end{Definition}

\begin{lemma}\label{empty}
	The bounding triangle $\T_i$ of the greedy edge $(p_i,q_i)$ of phase $f_i$ is empty of vertices.
\end{lemma}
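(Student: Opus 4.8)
The plan is to show that any vertex $z$ lying inside $\T_i$ would force a contradiction with the choices made by the algorithm at $p_i$. Recall that $\T_i$ is bounded by three lines: the diagonal $\ell_t^-$, the line $\mathcal{L}$ through $p_i$ (horizontal if $p_i$ is in the Northern/Southern quadrant, vertical otherwise), and the diagonal $\ell_{q_i}^+$ through $q_i$. The edge $(p_i,q_i)$ is a \emph{greedy} step, so $q_i$ is the $\Theta_4$-neighbour of $p_i$ in the cone $C_j^{p_i}$ containing $t$, and hence the canonical triangle $T_{p_iq_i}$ is empty of vertices. The key geometric fact I would establish is the containment $\T_i \subseteq T_{p_iq_i} \cup T(p_i,\ell_t^-)$, which is exactly what Figure~\ref{triangle} is asserting across its cases. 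Once that containment is in hand, emptiness of $\T_i$ follows immediately: $T_{p_iq_i}$ is empty because it is a canonical triangle of an edge, and $T(p_i,\ell_t^-)$ is empty because $p_i$ is a base vertex of a phase — the algorithm took a greedy step at $p_i$, which by Algorithm~\ref{algo} happens only when $p_i$ is \emph{clean} with respect to $\ell_t^-$, i.e., $T(p_i,\ell_t^-)$ is empty.

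So the real work is the containment $\T_i \subseteq T_{p_iq_i} \cup T(p_i,\ell_t^-)$, and I would prove it by a case analysis on which quadrant $p_i$ lies in (equivalently, which cone of $p_i$ contains $t$), matching the six sub-figures of Figure~\ref{triangle}. In each case I would pin down the three bounding lines of $\T_i$ and the three bounding lines of $T_{p_iq_i}$ explicitly. Two of the three sides are shared or nested in a transparent way: one side of $\T_i$ lies on $\ell_t^-$, which is also the line cutting off $T(p_i,\ell_t^-)$; the side of $\T_i$ on $\mathcal{L}$ (through $p_i$) is a supporting line of both $T(p_i,\ell_t^-)$ and $T_{p_iq_i}$ since these triangles have $p_i$ as a vertex and lie on the correct side of $\mathcal{L}$; and the third side, on $\ell_{q_i}^+$, is the diagonal through $q_i$ — since $q_i \in C_j^{p_i}$ and $T_{p_iq_i}$ is the canonical triangle clipped at the line through $q_i$ perpendicular to the bisector, one checks that $\ell_{q_i}^+$ separates $\T_i \setminus T(p_i,\ell_t^-)$ from the exterior of $T_{p_iq_i}$. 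Concretely, the portion of $\T_i$ above $\ell_t^-$ must sit inside $T_{p_iq_i}$, and the portion below $\ell_t^-$ must sit inside $T(p_i,\ell_t^-)$ — this is precisely the split depicted in each sub-figure, and it depends on checking that $q_i$ is positioned so that its perpendicular-projection line onto the bisector of $C_j^{p_i}$ does not cut off any part of $\T_i$ lying above $\ell_t^-$.

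The main obstacle, then, is handling all the quadrant cases uniformly and verifying the third-side inclusion, because the definition of $\mathcal{L}$ switches between horizontal and vertical depending on the quadrant of $p_i$, and the cone $C_j^{p_i}$ containing $t$ changes accordingly; when $p_i$ is in $C_1^t$ or $C_3^t$ (the boundary-ish cases noted just before the definition of a phase), a greedy step and a sweeping step coincide, and one should double-check that the clean hypothesis still applies so that $T(p_i,\ell_t^-)$ is empty. I expect each individual case to reduce to an elementary comparison of slopes and intercepts — no heavy computation — but the bookkeeping of orientations is where care is needed. Once the containment is verified in every case, the lemma is proved by the emptiness of the two covering triangles as noted above.
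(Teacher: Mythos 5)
Your proposal is correct and follows essentially the same route as the paper: the paper's proof simply observes that $\T_i \subseteq T_{p_iq_i}\cup T(p_i,\ell_t^-)$ (citing Fig.~\ref{triangle}) and that both covering triangles are empty, exactly the decomposition you use. Your additional case-by-case justification of the containment, and the remark that greedy steps occur only at clean vertices so $T(p_i,\ell_t^-)$ is indeed empty, just fills in details the paper leaves to the figure.
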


\begin{proof}
	Since $(p_i,q_i)$ is a greedy step, $p_i$ is clean with respect to $\ell_t^-$, and $T(p_i,\ell_t^-)$ and $T_{p_iq_i}$ are both empty of vertices. Observe that $T_i$ lies completely in one of the half-planes of $\ell_t^-$. If $T_{p_iq_i}$ does not intersect $\ell_t^-$, then $T_{p_iq_i} \subseteq T_i$ and $T(p_i,\ell_t^-) \nsubseteq T_i$. See Figure \ref{can1.1}. If $T_{p_iq_i}$ does intersect $\ell_t^-$, then observe that $T(p_i,\ell_t^-) \subseteq T_i$ and $T_{p_iq_i} \nsubseteq T_i$. In this case, $q_i$ can be on the same side of $\ell_t^-$ as $p_i$, and thus lie on $T_i$ (Figure \ref{can1.2}), or it can be on the opposite side of $\ell_t^-$, and not lie on $T_i$ (Figure \ref{can1.3}). In all cases observe that $\T_i \subseteq T_{p_iq_i}\cup T(p_i,\ell_t^-)$, and thus $\T_i$ is empty of vertices.
 \end{proof}

\begin{figure} 
	\begin{subfigure}[b]{0.4\textwidth}
		\includegraphics[page = 3]{pics/triangle.pdf}
	\end{subfigure}
\hfill
	\begin{subfigure}[b]{0.4\textwidth}
		\includegraphics[page = 4]{pics/triangle.pdf}
	\end{subfigure}
	\caption{$T_{p_iq_i}$ intersects $\ell_t^-$, and $q_i$ lies on $T_i$.}\label{can1.2}
\end{figure}

\begin{figure}
	\begin{subfigure}[b]{0.4\textwidth}
		\includegraphics[page = 7]{pics/triangle.pdf}
	\end{subfigure}
	\hfill
	\begin{subfigure}[b]{0.4\textwidth}
		\includegraphics[page = 8]{pics/triangle.pdf}
	\end{subfigure}
	\caption{$T_{p_iq_i}$ intersects $\ell_t^-$, and $q_i$ does not lie on $T_i$.}\label{can1.3}
\end{figure}

Notice that a bounding triangle $\T_i$ cannot be on both sides of $\ell_t^-$ by construction, and cannot be on both sides of $\ell_t^+$ since that would imply that $t$ is within $T_{p_iq_i}$. This implies that a bounding triangle $\T_i$ can only intersect the interior of a single quadrant. 

  Lemma \ref{linfty} has strong implications about the positions of bounding triangles relative to one another in the same quadrant. For a vertex $p$, let $\ov{p}$ be the intersection of $\ell_t^-$ and $\ell_p^+$, i.e, $\ov{p}$ is the intersection of the positive diagonal of $p$ and the negative diagonal of $t$.

\begin{lemma}\label{orderedtriangles}
	If $\T_i$ and $\T_j$ are two bounding triangles in the same quadrant, then $\ov{p}_j\ov{q}_j$ and $\ov{p}_i\ov{q}_i$ are disjoint segments on $\ell_t^-$. 
\end{lemma}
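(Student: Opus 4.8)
The plan is to argue by contradiction, using Lemma~\ref{linfty} to control how consecutive bounding triangles are nested along $\ell_t^-$. First I would fix the quadrant; by symmetry assume it is the Western quadrant, so that for a vertex $p$ in it, $L_\infty(p,t) = d_x(p,t)$, and $\overline{p}$, the intersection of $\ell_t^-$ with $\ell_p^+$, is the point on $\ell_t^-$ at $L_1$-distance essentially governed by $d_x(p,t)$. The key observation is that, since the greedy steps in the sequence $((p_1,q_1),\ldots)$ occur in the order $p_1, p_2, \ldots$ along the path, Lemma~\ref{linfty} gives $L_\infty(p_i,t) \geq L_\infty(p_j,t)$ whenever $i < j$ (more precisely, $L_\infty$ is non-increasing along the whole path, and the $p_i$'s and $q_i$'s all lie on it in order $p_i, q_i, p_{i+1}, \ldots$). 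So I would first translate ``$\T_i$ and $\T_j$ in the same quadrant'' with $i<j$ into a statement about the $x$-coordinates (or $L_\infty$-values) of $p_i, q_i, p_j, q_j$, and hence about the positions of $\overline{p}_i, \overline{q}_i, \overline{p}_j, \overline{q}_j$ on the line $\ell_t^-$.

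The main steps would be: (1) identify, for a bounding triangle $\T_i$ lying in a given quadrant, exactly which segment of $\ell_t^-$ it cuts off — namely the segment $\overline{p}_i\,\overline{q}_i$, where $\overline{p}_i$ and $\overline{q}_i$ are the intersections of $\ell_t^-$ with the $+1$-slope lines through $p_i$ and $q_i$ respectively (this is immediate from the definition of $\T_i$ as bounded by $\ell_t^-$, $\mathcal{L}$, and $\ell_{q_i}^+$, once one notes $\mathcal{L}$ is either horizontal or vertical and $p_i$ determines one endpoint). (2) Show that within a single phase the segment $\overline{p}_i\overline{q}_i$ is a genuine (nondegenerate) sub-segment of $\ell_t^-$ oriented so that $\overline{q}_i$ is strictly closer to $t$ than $\overline{p}_i$, using that $q_i$ results from a greedy step from $p_i$ toward $t$ and so $L_\infty(q_i,t) < L_\infty(p_i,t)$ (the $<$ because a greedy step strictly decreases the coordinate, as in the proof of Lemma~\ref{linfty}; if it reached $t$ we are in the final phase). (3) For $i < j$ in the same quadrant, use the monotonicity of $L_\infty$ along the path — in particular $L_\infty(q_i,t) \geq L_\infty(p_{i+1},t) \geq \cdots \geq L_\infty(p_j,t)$ — to conclude that the whole segment $\overline{p}_j\overline{q}_j$ lies strictly between $\overline{q}_i$ and $t$ along $\ell_t^-$, hence is disjoint from $\overline{p}_i\overline{q}_i$.

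The subtlety I expect to be the main obstacle is handling the corner cases where a bounding triangle sits right at the boundary between quadrants, or where $p_i$ or $q_i$ lies on $\ell_t^-$ or $\ell_t^+$ (so a projected point $\overline{p}_i$ degenerates or coincides with $t$), and making sure the strictness needed for ``disjoint'' (rather than merely ``non-overlapping interiors'') actually holds. The remark just before the lemma already guarantees each $\T_i$ meets only one quadrant's interior, which should let me treat each quadrant separately; I would need to check that when the current vertex passes from one quadrant to an adjacent one the indexing of greedy steps still respects the $L_\infty$-ordering, and that a greedy step never fails to strictly decrease the relevant coordinate unless it lands on $t$ (covered by the case analysis in Lemma~\ref{linfty}). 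Once those boundary situations are dispatched, the disjointness is just the statement that a strictly decreasing sequence of ``radii'' $L_\infty(p_i,t) > L_\infty(q_i,t) \geq L_\infty(p_j,t) > L_\infty(q_j,t)$ forces the corresponding intervals on $\ell_t^-$ to be nested-then-disjoint, which is elementary.
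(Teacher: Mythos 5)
There is a genuine gap in step (3), and it is precisely the point where the paper has to work hardest. Your plan deduces the ordering $\ov{q}_i$ before $\ov{p}_j$ on $\ell_t^-$ directly from the monotonicity $L_\infty(q_i,t)\geq L_\infty(p_j,t)$ of Lemma~\ref{linfty}. But the projection $\ov{p}$ is taken along the diagonal $\ell_p^+$, so the distance $L_1(\ov{p},t)$ is governed by the diagonal coordinate of $p$ (in the Western quadrant, by $d_x(p,t)+y_p$-type quantities, i.e.\ by which slope-$+1$ line $p$ lies on), not by $L_\infty(p,t)$. A point with smaller $L_\infty$-distance to $t$ can easily project \emph{farther} from $t$ on $\ell_t^-$ if it lies closer to $\ell_t^-$ in direction: compare $q_i=(-10,-9)$, whose projection is at $L_1$-distance $1$ from $t$, with a later vertex $(-5,4)$, whose projection is at distance $9$. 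This is not a hypothetical worry for the algorithm: the sweeping steps between $q_i$ and $p_{i+1}$ move the current vertex up-and-right and can cross the line $\ell_{q_i}^+$, which pushes the projection \emph{away} from $t$. So the ``elementary nested-intervals'' conclusion you invoke at the end simply does not follow from the decreasing sequence of $L_\infty$-radii.

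What closes this gap in the paper is Lemma~\ref{empty}, which you never use. The paper's argument is: Lemma~\ref{linfty} only pins down that $p_j$ (being in the same quadrant as $p_i$ with $L_\infty(p_j,t)<L_\infty(p_i,t)$) cannot lie on the far side of the line $\mathcal{L}$ through $p_i$; then, since $p_j$ is a vertex and the bounding triangle $\T_i$ (bounded by $\ell_t^-$, $\mathcal{L}$, and $\ell_{q_i}^+$) is empty of vertices, $p_j$ cannot lie between $\ell_{q_i}^+$ and $\ell_t^-$, hence it lies on a lower diagonal than $q_i$, which is what forces $L_1(\ov{q}_i,t)>L_1(\ov{p}_j,t)$. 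Your within-phase claim (that $\ov{q}_i$ is strictly closer to $t$ than $\ov{p}_i$) is salvageable, but for the correct reason used in the paper — the greedy edge $(p_i,q_i)$ stays in the cone of $p_i$ containing $t$, so the diagonal coordinate drops by exactly $L_1(p_i,q_i)$ — rather than via $L_\infty$. Without bringing in the emptiness of $\T_i$ (or some equivalent geometric constraint on where the path can wander between $q_i$ and $p_j$), the disjointness claim cannot be established by your route.
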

 
\begin{proof}
	
	Without loss of generality, assume that $i<j$. Note that for a point $\ov{v}$ lying on $\ell_t^-$, $L_1(\ov{v},t)=\sqrt{2}\cdot L_2(\ov{v},t)$. That is, the $L_1$- and $L_2$-distances are proportional. Then
	Lemma \ref{orderedtriangles} is true if $L_1(\ov{p}_i,t)>L_1(\ov{q}_i,t)>L_1(\ov{p}_j,t)>L_1(\ov{q}_j,t)$ is true. Assume without loss of generality $\T_i$ and $\T_j$ are in the Western quadrant. See Figure \ref{potential2}.
	 Note that $L_1(\ov{p}_i,t)-L_1(\ov{q}_i,t)= L_1(\ov{p}_i,\ov{q}_i)= L_1(p_i,q_i)$, since $q_i$ and $t$ are in the same cone of $p_i$. Thus $L_1(\ov{p}_i,t)>L_1(\ov{q}_i,t)$ and $L_1(\ov{p}_j,t)>L_1(\ov{q}_j,t)$ are true. What remains to be shown is that $L_1(\ov{q}_i,t)>L_1(\ov{p}_j,t)$.
	Lemma \ref{linfty} implies $L_\infty(p_i,t)>L_\infty(p_j,t)$, and both points are in the Western quadrant (by the definition of bounding triangle), thus $p_j$ cannot be left of $p_i$. This, and the fact that $\T_i$ is empty, implies $p_j$ must be below $\ell_{q_{i}}^+$, which implies $\ell_{p_j}^+$ is below $\ell_{q_i}^+$, which implies $L_1(\ov{q}_i,t)>L_1(\ov{p}_j,t)$. 
\end{proof}

 Figure \ref{potential2} shows two consecutive bounding triangles in the Western quadrant, and the associated segments $\ov{p}_i\ov{q}_i$ and $\ov{p}_{i+1}\ov{q}_{i+1}$.

 Let $p_1'$ be the vertical projection of $p_1$ onto $\ell_t^-$. Then the following inequality is true.

\begin{corollary}\label{lemma49}
	 $\sum_{i=1}^{m-1} L_1(\ov{p}_i,\ov{q}_i) \leq 4\cdot L_1(p_1',t)$.
\end{corollary}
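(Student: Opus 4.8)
The plan is to bound the total length $\sum_{i=1}^{m-1} L_1(\ov p_i,\ov q_i)$, which by Lemma~\ref{orderedtriangles} is a sum of disjoint segment lengths along $\ell_t^-$, by partitioning the phases according to which quadrant their bounding triangle intersects. There are four quadrants, so I would split the sum as $\sum_{i=1}^{m-1} L_1(\ov p_i,\ov q_i) = \sum_{Q} \sum_{i : \T_i \subseteq Q} L_1(\ov p_i,\ov q_i)$, where $Q$ ranges over the Northern, Southern, Eastern, and Western quadrants (recall each $\T_i$ meets the interior of exactly one quadrant). The goal is then to show each of the four inner sums is at most $L_1(p_1',t)$.

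For a single quadrant $Q$, Lemma~\ref{orderedtriangles} says that the segments $\ov p_i\ov q_i$ for phases with $\T_i\subseteq Q$ are pairwise disjoint and, in fact, nested in the order given by the algorithm (the proof of that lemma establishes $L_1(\ov p_i,t)>L_1(\ov q_i,t)>L_1(\ov p_j,t)$ for $i<j$ in the same quadrant). Hence $\sum_{i:\T_i\subseteq Q} L_1(\ov p_i,\ov q_i)$ telescopes and is bounded by $L_1(\ov p_{i_0},t)$, where $i_0$ is the first phase in that quadrant, i.e. it is at most the $L_1$-distance from $\ov p_{i_0}$ to $t$ along $\ell_t^-$. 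So the task reduces to showing that for the first phase $i_0$ entering quadrant $Q$, we have $L_1(\ov p_{i_0},t)\le L_1(p_1',t)$. I would argue this from Lemma~\ref{linfty}: the $L_\infty$-distance to $t$ is nonincreasing along the path, so $L_\infty(p_{i_0},t)\le L_\infty(p_1,t)$; since $p_1'$ is the vertical projection of $p_1$ onto $\ell_t^-$ and $s\in C_2^t$, one checks $L_1(p_1',t)=2\,L_\infty(p_1,t)$ (the projection onto the diagonal doubles the controlling coordinate into an $L_1$-length), and similarly $L_1(\ov p_{i_0},t)\le 2\,L_\infty(p_{i_0},t)$ because $\ov p_{i_0}$ lies on $\ell_t^-$ and is the image of $p_{i_0}$ under a $+1$-slope projection, which cannot increase the relevant coordinate beyond $L_\infty(p_{i_0},t)$. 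Combining gives $L_1(\ov p_{i_0},t)\le 2L_\infty(p_{i_0},t)\le 2L_\infty(p_1,t)=L_1(p_1',t)$, and summing over the four quadrants yields the factor $4$.

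The main obstacle I anticipate is the quadrant-by-quadrant bookkeeping at the boundaries: making precise that each bounding triangle is charged to exactly one quadrant, and handling the degenerate cases where $p_i$ lies on a diagonal $\ell_t^-$ or $\ell_t^+$ (where the triangle could be viewed as straddling two quadrants, or where a "greedy step" coincides with a "sweeping step" as noted after Lemma~\ref{linfty}). I would resolve this by adopting the convention already implicit in the paper — a triangle is assigned to the unique quadrant whose \emph{interior} it intersects — and by verifying that the monotonicity in Lemma~\ref{orderedtriangles} still gives disjointness when two phases land in the same quadrant even if one of them is a boundary case. A secondary point to get right is the exact constant relating $L_1(\ov p_{i_0},t)$ to $L_\infty(p_{i_0},t)$ in each of the four quadrants separately (Northern/Southern versus Eastern/Western), since the definition of the bounding triangle uses a horizontal line $\mathcal{L}$ in two quadrants and a vertical line in the other two; in every case the bound $L_1(\ov p_{i_0},t)\le 2L_\infty(p_{i_0},t)$ should hold, but it is worth checking each configuration against Fig.~\ref{triangle} rather than asserting it uniformly.
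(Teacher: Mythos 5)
Your proposal is correct and follows essentially the same route as the paper: split the segments $\ov{p}_i\ov{q}_i$ by quadrant, use the disjointness/ordering of Lemma~\ref{orderedtriangles} within each quadrant, and use Lemma~\ref{linfty} to bound every $L_1(\ov{p}_i,t)$ by $L_1(p_1',t)$, giving the factor $4$. The paper states this in two sentences; you merely fill in the (correct) detail that $L_1(\ov{p}_i,t)\le 2L_\infty(p_i,t)\le 2L_\infty(p_1,t)=L_1(p_1',t)$.
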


\begin{proof}
	Lemma \ref{linfty} implies that $L_1(p_1',t)>L_1(\ov{p}_i,t)$ for all $1\leq i \leq m-1$. This combined with Lemma \ref{orderedtriangles} and the fact that there are four quadrants implies the lemma. 
  \end{proof}

\begin{figure}
	\centering
	\includegraphics[page = 22, width=9cm]{pics/triangle.pdf}\caption{The red path from $p_i$ to $\ov{p}_i$ is the same length as the violet path. Thus $\Phi(p_i,p_{i+1})$ is equal to the length of the black path. }\label{potential2}
\end{figure}
   
\begin{figure}
	\centering
	\includegraphics[page = 23, width = 8.5cm]{pics/triangle.pdf}\caption{$L_1(q_i,\ov{p}_{i+1})<L_1(p_i,q_i)$, since $p_i$ is in $C_2^{q_i}$ and on the opposite side of $\ell_t^-$ as $q_i$.}\label{potential1}
\end{figure}

\begin{lemma}\label{thislemma}
	$L_2(f_i)\leq L_1(p_i,q_i) + L_1(q_i, p_{i+1}).$
\end{lemma}

\begin{proof}
	This follows from the fact that $(p_i,q_i)$ is an edge, and $\path{q_i,p_{i+1}}$ is $x$- and $y$-monotone. 
  \end{proof}

Each of the bounding triangles $\T_i$ are associated with the segment $\ov{p}_i\ov{q}_i$ and the phase $f_i$. A natural approach is to try to bound the length of the phase $L_2(f_i)$ by the length of the segment $\ov{p}_i\ov{q}_i$. Unfortunately this approach does not quite work since the length of $\ov{p}_i,\ov{q}_i$ is proportional to the length of $(p_i,q_i)$, but the length of the sequence of sweeping steps of $f_i$ can be unbounded with respect to the length of $(p_i,q_i)$.  However, a relatively simple potential function reassigns the lengths of sweeping steps to either the previous phase or the next phase so that these lengths are proportional to the length of $(p_i,q_i)$. 
We define the potential function $\Phi(p_i,p_{i+1})  = L_1(p_i,q_i) + L_1(q_i,\ov{p}_{i+1})-L_1(p_i,\ov{p}_i)$ for all $1\leq i <m-1$. 

\begin{lemma}\label{lemma412}
	$\Phi(p_i,p_{i+1})\leq 2\cdot L_1(\ov{p}_i,\ov{q}_i)$.
\end{lemma}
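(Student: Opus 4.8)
The plan is to peel off the easy part of $\Phi$ first. Since $q_i$ lies in the same cone of $p_i$ as $t$, the proof of Lemma~\ref{orderedtriangles} gives $L_1(p_i,q_i)=L_1(\ov{p}_i,\ov{q}_i)$; substituting this into the definition of $\Phi$, the claim becomes
\[
L_1(q_i,\ov{p}_{i+1})\ \le\ L_1(p_i,\ov{p}_i)+L_1(\ov{p}_i,\ov{q}_i).
\]
As in Lemma~\ref{orderedtriangles} I would assume without loss of generality that $\T_i$ lies in the Western quadrant, which forces $p_i\in C_2^t$ to lie below $\ell_t^-$ and $t\in C_0^{p_i}$; the three other quadrants are symmetric. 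Put $t$ at the origin and write $p_i=(-a,b)$ with $0\le b<a$ and $q_i=(-a+c,\,b-d)$ with $c,d\ge 0$; since $q_i$ is the $\Theta_4$-neighbour of $p_i$ in the cone containing $t$, comparing projections onto the bisector of $C_0^{p_i}$ gives $c+d\le a+b$. Then $\ov{p}_i$ and $\ov{q}_i$ are explicit, $L_1(p_i,\ov{p}_i)=a-b$ and $L_1(\ov{p}_i,\ov{q}_i)=c+d$, so everything reduces to a good enough upper bound on $L_1(q_i,\ov{p}_{i+1})$.

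I would then split on whether the greedy step $(p_i,q_i)$ overshoots $\ell_t^-$. If $q_i$ is above $\ell_t^-$, then $T(q_i,\ell_t^-)=\emptyset$, so $q_i$ is clean, the phase $f_i$ is the single edge $(p_i,q_i)$, and $p_{i+1}=q_i$; here $\ov{p}_{i+1}=\ov{q}_i$ and the inequality collapses to $b<a$. If $q_i$ is below $\ell_t^-$, then $f_i$ continues with a (possibly empty) sequence of sweeping steps forming the $x$- and $y$-monotone path $\path{q_i,p_{i+1}}$, which only moves up and to the right and never leaves the halfplane below $\ell_t^-$; so $p_{i+1}$ is weakly north-east of $q_i$ and below $\ell_t^-$. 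The extra fact I need is that $p_{i+1}$ lies on or below the line $\ell_{q_i}^+$ — equivalently, $\ov{p}_{i+1}$ lies on the $t$-side of $\ov{q}_i$ along $\ell_t^-$. This is precisely where emptiness of the bounding triangle enters: a vertex north-east of $q_i$, below $\ell_t^-$, right of the vertical line through $p_i$, and strictly above $\ell_{q_i}^+$ would be interior to $\T_i$, contradicting Lemma~\ref{empty}; and a monotone north-east path cannot leave the halfplane below $\ell_{q_i}^+$ and return, so the whole path, and in particular $p_{i+1}$, stays on or below $\ell_{q_i}^+$.

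With $p_{i+1}$ thus located I would finish by a direct $L_1$ computation, split once more according to whether $p_{i+1}$ ends up in the Western or in the Southern quadrant (above or below $\ell_t^+$), which fixes whether $\ov{p}_{i+1}$ is north-west or south-east of $t$ on $\ell_t^-$. The sign information needed to resolve the absolute values comes for free from the location facts: ``$p_{i+1}$ below $\ell_t^-$'' forces the $y$-coordinate of $\ov{p}_{i+1}$ to exceed that of $q_i$, and when $\ov{p}_{i+1}$ also lies north-west of $t$, Lemma~\ref{linfty} (bounding how far left $p_{i+1}$ can be) together with emptiness of $\T_i$ forces its $x$-coordinate to exceed that of $q_i$ as well. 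With these signs the position of $\ov{p}_{i+1}$ along $\ell_t^-$ drops out, $\Phi$ simplifies to $2d$ (when $q_i$ is below $\ell_t^-$) or to $2(c+b-a)$ (otherwise), and the target inequality reduces to $c\ge 0$, respectively $b<a$. I expect the main obstacle to be organising the case split cleanly and, above all, the geometric step pinning down $\ov{p}_{i+1}$ relative to $\ov{q}_i$; once Lemma~\ref{empty} and Lemma~\ref{linfty} are combined correctly, the rest is bookkeeping.
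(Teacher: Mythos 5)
Your reduction (peeling off $L_1(p_i,q_i)=L_1(\ov{p}_i,\ov{q}_i)$ and normalising to the Western quadrant) and your non-overshooting case are fine and essentially mirror the paper's first case. The genuine gap is in the overshoot case. You claim that if $q_i$ lies above $\ell_t^-$ then $T(q_i,\ell_t^-)=\emptyset$, hence $q_i$ is clean, the phase is the single edge $(p_i,q_i)$ and $p_{i+1}=q_i$. That misreads the definition of clean: the definition is stated for a vertex below $\ell_t^-$ only ``without loss of generality''; for a vertex $v$ above $\ell_t^-$ the relevant triangle is the part of $C_3^{v}$ above $\ell_t^-$, and sweeping steps towards $\ell_t^-$ are taken from that side too (this is exactly the situation in the paper's second case, where $p_i$ and $q_i$ are on opposite sides of $\ell_t^-$, and it is what produces the long sweeping tails of the Northern and Eastern phases in the lower-bound construction). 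So after the greedy edge crosses $\ell_t^-$ the phase does not in general end at $q_i$, $p_{i+1}$ can be far from $q_i$, and your computation of $\Phi$ in this case rests on a false premise.

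The inequality can still be rescued along your lines, which is what the paper does: since the sweeping tail stays on $q_i$'s side of $\ell_t^-$ and moves coordinate-wise towards the line, $\ov{p}_{i+1}$ lies in the closed cone $C_3^{q_i}$, so $L_1(q_i,\ov{p}_{i+1})$ equals the $L_1$-distance from $q_i$ to $\ell_t^-$, which is less than $L_1(p_i,q_i)$ because the edge $(p_i,q_i)$ crosses the line; then $\Phi(p_i,p_{i+1})\leq 2\cdot L_1(p_i,q_i)=2\cdot L_1(\ov{p}_i,\ov{q}_i)$. You need to supply that argument rather than the ``$q_i$ is clean'' shortcut. As a side remark, the part of your write-up you expected to be hardest --- pinning $\ov{p}_{i+1}$ on the $t$-side of $\ov{q}_i$ via Lemma~\ref{empty} and Lemma~\ref{linfty} --- is not actually needed: every point of $\ell_t^-$ lying in the closed cone $C_1^{q_i}$ is at the same $L_1$-distance $|q_{i,x}+q_{i,y}|$ from $q_i$, so monotonicity of the sweeping tail alone already gives $L_1(q_i,\ov{p}_{i+1})$ in your first case.
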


\begin{proof}
	
	Without loss of generality, assume that $q_i$ is in the Western quadrant. Since greedy steps can only originate in $C_2^t$ and $C_0^t$, and a greedy step cannot cross $\ell_t^+$, $p_i$ is in $C_2^t$ in either the Western or Northern quadrant (Figs. \ref{potential2} and \ref{potential1} respectively). Let $v_i$ be the bottommost point of $T_{p_iq_i}$. Since $v_i$ is on $\ell_{q_i}^+$, we have $\ov{v_i}=\ov{q_i}$. Since both $p_i$ and $\ov{p}_i$ are on $\ell_{p_i}^+$, and both $v_i$ and $\ov{v}_i$ are on $\ell_{v_i}^+$, we have $L_1(p_i,v_i)=L_1(\ov{p}_i,\ov{v_i}) =L_1(\ov{p}_i,\ov{q_i})$. Thus it is enough to prove that $\Phi(p_i,p_{i+1})\leq 2\cdot L_1(p_i,v_i)$. Observe that  $L_1(p_i,v_i)+L_1(v_i,q_i)\geq L_1(p_i,q_i)$ by the triangle inequality. 
	
	If $p_i$ is in the Western quadrant, we have that $L_1(v_i,\ov{p}_i) = L_1(v_i,p_i)+L_1(p_i,\ov{p}_i)$.
	Since both $\ov{p}_i$ and $\ov{p}_{i+1}$ lie in $C_1^{v_i}$ on $\ell_t^-$, we have also have that $L_1(v_i,\ov{p}_i) = L_1(v_i,\ov{p}_{i+1})$. Thus
	
	\begin{align*}
		\Phi(p_i,p_{i+1})& = L_1(p_i,q_i) + L_1(q_i,\ov{p}_{i+1})-L_1(p_i,\ov{p}_i)\\
		&\leq L_1(p_i,v_i)+L_1(v_i,\ov{p}_{i+1})-L_1(p_i,\ov{p}_i)\\
		&= L_1(p_i,v_i)+L_1(v_i,\ov{p}_{i})-L_1(p_i,\ov{p}_i)\\
		&= 2\cdot L_1(p_i,v_i)
	\end{align*}
	
	as required. Otherwise $p_i$ is in the Northern quadrant. Observe that $p_i$ and $\ov{p}_{i+1}$ are both in $C_1^{v_i}$, but $p_i$ is above $\ell_t^-$ while $\ov{p}_{i+1}$ is on it, thus $L_1(p_i,v_i)> L_1(v_i, \ov{p}_{i+1})$. Thus 
	
	\begin{align*}
	\Phi(p_i,p_{i+1})& = L_1(p_i,q_i) + L_1(q_i,\ov{p}_{i+1})-L_1(p_i,\ov{p}_i)\\
	&\leq L_1(p_i,v_i)+L_1(v_i,\ov{p}_{i+1})\\
	&\leq 2\cdot L_1(p_i,v_i)
	\end{align*}
	
	as required.
	  \end{proof}
 
We can now prove the main theorem. 

\begin{theorem}\label{maintheorem}
	The path produced by Algorithm \ref{algo} has length at most $17\cdot L_2(s,t)$.
\end{theorem}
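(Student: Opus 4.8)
# Proof Proposal

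The plan is to decompose the path $\pathF{s,t}$ into three contributions and bound each separately. Recall that $\pathF{s,t} = \path{s,p_1} + \sum_{i=1}^{m-1} f_i$, where $p_1$ is the base vertex of the first phase. First I would handle the initial segment $\path{s,p_1}$: since every step before the first greedy step is a sweeping step toward $\ell_t^-$, and since sweeping steps decrease $d_x(\cdot,t)$ while keeping the point below $\ell_t^-$ (as established in the proof of Lemma~\ref{linfty}), the path $\path{s,p_1}$ is $x$- and $y$-monotone, so $L_2(\path{s,p_1}) \leq L_1(s,p_1)$, and this is easily bounded in terms of $L_2(s,t)$ via $L_1 \leq \sqrt{2}\, L_\infty \cdot$ (constant) together with the fact that $s$ is in $C_2^t$ under $\ell_t^-$.

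The core of the argument is to bound $\sum_{i=1}^{m-1} L_2(f_i)$. By Observation~\ref{thislemma}, $L_2(f_i) \leq L_1(p_i,q_i) + L_1(q_i,p_{i+1})$. I would rewrite the telescoping sum using the potential function $\Phi$: note that $L_1(q_i,p_{i+1}) \leq L_1(q_i,\ov{p}_{i+1}) + L_1(\ov{p}_{i+1}, p_{i+1})$ (or an exact relation exploiting monotonicity of $\path{q_i,p_{i+1}}$), and that $L_1(p_i, \ov{p}_i)$ terms telescope against the $L_1(\ov{p}_{i+1},p_{i+1})$ terms across consecutive phases. The upshot is that $\sum_i L_2(f_i)$ is controlled by $\sum_i \Phi(p_i,p_{i+1})$ plus a boundary term $L_1(p_1,\ov{p}_1)$ (and possibly a final-phase term handled by monotonicity). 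Applying Lemma~\ref{lemma412}, $\sum_i \Phi(p_i,p_{i+1}) \leq 2\sum_i L_1(\ov{p}_i,\ov{q}_i)$, and then Corollary~\ref{lemma49} gives $\sum_i L_1(\ov{p}_i,\ov{q}_i) \leq 4\, L_1(p_1',t)$. Chaining these, $\sum_i L_2(f_i) \leq 8\, L_1(p_1',t) + (\text{boundary terms})$.

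The final step is to convert $L_1(p_1',t)$ and the leftover boundary/first-segment terms into a bound in $L_2(s,t)$. Here $p_1'$ is the vertical projection of $p_1$ onto $\ell_t^-$, and since $p_1$ lies below $\ell_t^-$ in the relevant quadrant with $L_\infty(p_1,t) \leq L_\infty(s,t)$ by Lemma~\ref{linfty}, one gets $L_1(p_1',t) \leq \sqrt{2}\, L_2(p_1',t)$ and $L_2(p_1',t)$ bounded by a small multiple of $L_2(s,t)$. Adding up $L_2(\path{s,p_1})$, the boundary terms, and $8\,L_1(p_1',t)$, all expressed as multiples of $L_2(s,t)$, should total at most $17\, L_2(s,t)$ after tracking the constants carefully — the $\sqrt{2}$ conversions between $L_1$ and $L_2$ on the diagonal, the factor $4$ from the four quadrants, and the factor $2$ from the potential lemma are the main numeric ingredients.

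The main obstacle I anticipate is the bookkeeping in the telescoping argument: making precise how the negative term $-L_1(p_i,\ov{p}_i)$ inside $\Phi(p_i,p_{i+1})$ cancels against $L_1(q_{i-1},p_i)$-type contributions from the previous phase, and correctly handling the first phase (where there is no preceding phase to cancel against) and the last phase (where $p_m = t$, so $\path{q_{m-1},t}$ must be bounded directly by monotonicity as $L_1(q_{m-1},t)$ rather than via $\Phi$). A secondary subtlety is verifying that the constant that emerges is genuinely $17$ and not something slightly larger, which requires being careful about whether the $L_1$-versus-$L_2$ slack is being paid once or twice on each piece; the matching lower bound promised in Section~\ref{five} suggests the constants above are tight and must be combined without waste.
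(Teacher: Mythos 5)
Your proposal follows essentially the same route as the paper's proof: the identical decomposition into $\path{s,p_1}$ plus the phases $f_i$, the same telescoping of the $L_1(p_i,\ov{p}_i)$ terms via the potential $\Phi$, and the same chain through Observation~\ref{thislemma}, Lemma~\ref{lemma412}, Corollary~\ref{lemma49}, and Lemma~\ref{linfty} to assemble the constant $17$. One caution on the bookkeeping you flag: the relation actually needed is the exact betweenness identity $L_1(q_i,p_{i+1})+L_1(p_{i+1},\ov{p}_{i+1})=L_1(q_i,\ov{p}_{i+1})$ (which holds because $\ov{p}_{i+1}$ lies beyond $p_{i+1}$ as seen from $q_i$ in both coordinates), since the plain triangle-inequality direction you wrote would add rather than cancel the $L_1(p_{i+1},\ov{p}_{i+1})$ terms — this is precisely the ``exact relation exploiting monotonicity'' your parenthetical anticipates, and it is what the paper uses.
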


\begin{proof}
	Recall that $t=p_{m}$. Thus $L_1(p_{m}, \ov{p}_{m})=0$, and 
	
	\begin{equation}
	\sum_{i=1}^{m-1} (L_1(p_{i+1} ,\ov{p}_{i+1})-L_1(p_i,\ov{p}_i)) = -L_1(p_1,\ov{p}_1).\label{line}
	\end{equation}
	
	Since $p_1$ is in $C_1^s$, and $\ov{p}_1$ is in $C_1^{p_1}$, we have that $\ov{p}_1$ is in $C_1^s$. Since we assume that $\ell_t^-$ is the closest diagonal to $s$, that gives us
	
	\begin{equation}
	L_1(s,\ov{p}_1)\leq L_\infty(s,t) \leq L_2(s,t)\label{line3}. 
	\end{equation}
	
	Additionally, since $p_1'$ is a point on $\ell_t^-$, we have $L_1(p_1',t)= 2L_\infty(p_1',t)$. Observe that $L_\infty(s,t)> L_\infty(p_1,t)=L_\infty(p_1',t)$, thus $L_1(p_1',t) \leq 2L_\infty(s,t)$, and $L_\infty(s,t)\leq L_2(s,t)$. That gives us
	
	\begin{equation}
		L_1(p_1',t) \leq 2\cdot L_\infty(s,t)\leq 2 \cdot L_2(s,t). \label{line2}
	\end{equation}

	 Thus $L_2(\pathF{s,t})$ is equal to

	\begin{align*}
	&L_2(\path{s,p_1})+\sum_{i=1}^{m-1} L_2(f_i)\nonumber\\
	\leq &~L_1(s,p_1)+\sum_{i=1}^{m-1} (L_1(p_i,q_i)+L_1(q_i,p_{i+1}))\text{ (Lemma \ref{thislemma})}\\
	=&~ L_1(s,p_1)+L_1(p_{1},\ov{p}_{1}) -L_1(p_{1},\ov{p}_{1})+\sum_{i=1}^{m-1} \left(L_1(p_i,q_i)+L_1(q_i,p_{i+1})\right)\nonumber \\
	= &~L_1(s,\ov{p}_{1})+\sum_{i=1}^{m-1}(L_1(p_{i+1}, \ov{p}_{i+1})-L_1(p_i,\ov{p}_i))+\sum_{i=1}^{m-1} (L_1(p_i,q_i)+L_1(q_i,p_{i+1}))\\
	&\hspace{5.7cm}\text{ (above is by \eqref{line})}\\
	= &~L_1(s,\ov{p}_{1})+\sum_{i=1}^{m-1} (L_1(p_i,q_i)+L_1(q_i,\ov{p}_{i+1})-L_1(p_i,\ov{p}_i))\\
	= &~L_1(s,\ov{p}_1)+\sum_{i=1}^{m-1} {\Phi(p_i,p_{i+1})} \\
	\leq &~ L_1(s,\ov{p}_1) + 2
	\sum_{i=1}^{m-1} L_1(\ov{p}_i,\ov{q}_i)\hspace{2.15cm}\text{ (Lemma \ref{lemma412})}\\
	\leq &~ L_1(s,\ov{p}_1)+ 8\cdot L_1(p_1',t)\hspace\fill\hspace{2.6cm}\text{ (Corollary \ref{lemma49})}\\
	\leq &~ L_2(s,t)+ 16\cdot L_2(s,t)\hspace{2.5cm}\text{ (by \eqref{line3} and \eqref{line2})}\\
	\leq &~ 17\cdot L_2(s,t)
	\end{align*}

	as required. 
  \end{proof}

\section{Lower bound}\label{five}
 
\begin{figure}
	\centering
	\includegraphics[page = 1, width= 8.4cm]{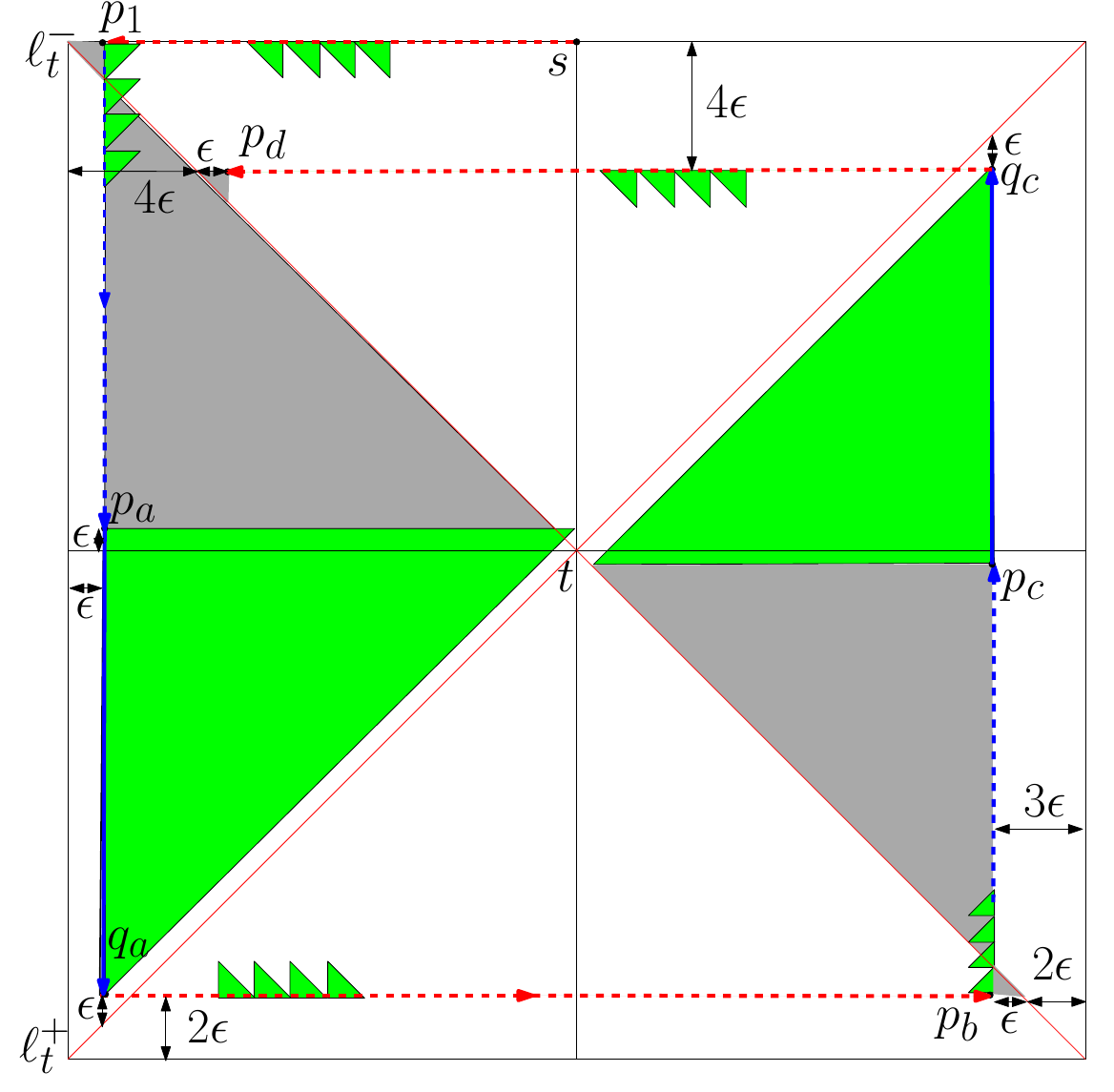}\caption{ The path from $s$ to $p_d$. Blue lines are greedy steps, red are sweeping steps towards $\ell_t^-$. The grey and green regions are empty of vertices.}\label{lowerbound1}
	\includegraphics[page = 2, width= 8.4cm]{pics/lowerbound3.pdf}\caption{The path from $p_d$ to $p_e$. }\label{lowerbound2}
\end{figure}
\begin{figure}
	\centering
	\includegraphics[page = 3, width= 8.4cm]{pics/lowerbound3.pdf}\caption{The last greedy edge $(p_e,q_e)$.\vspace{0.44cm}}\label{lowerbound3}
	
	\includegraphics[page = 4, width= 8.4cm]{pics/lowerbound3.pdf}\caption{The final path from $q_e$ to $t$.}\label{lowerbound4}
\end{figure}

In this section, we show that our analysis of the routing ratio of Algorithm~\ref{algo} is tight: We will construct a set of points, together with two vertices $s$ and $t$, such that the routing ratio of Algorithm~\ref{algo} is arbitrarily close to $17$. The construction is illustrated in Figs.~\ref{lowerbound1}, \ref{lowerbound2}, \ref{lowerbound3}, and \ref{lowerbound4}. We forgo our general position assumption in order to make the demonstration of the lower bound simpler. For this particular example, if a vertex is on the boundary between two cones or two quadrants, we say that that vertex is in the \emph{counter-clockwise} of the two cones or quadrants. Let $\epsilon>0$ be an arbitrarily small number. Let $1<a<b<c<d<e$ be (not necessarily consecutive) integers. 
 
Let $t$ be at coordinates $(0,0)$. Let $s$ be at coordinates $(0,1)$. Vertex $p_1$ is at $(-1+\epsilon,1)$. See Figure \ref{lowerbound1}. Place a sequence of vertices directly left of $s$ at coordinates $(-\epsilon,1),(-2\epsilon,1),(-3\epsilon,1)..., (-1+\epsilon, 1)=p_1$. This implies that $s$ is not clean, so we take sweeping steps along this sequence of vertices (red dashed line) until we reach $p_1$. The path from $s$ to $p_1$ has length $(1-\epsilon)$.

Directly below $p_1$ (and thus in $C_0^{p_1}$) there is a sequence of vertices at coordinates $(-1+\epsilon, 1-\epsilon),(-1+\epsilon,1-2\epsilon),..., (-1+\epsilon, \epsilon)=p_a$, all of which are clean. Since $p_1$ is clean with respect to $\ell_t^-$, we take greedy steps along this sequence to $p_a$ (blue dashed line). Directly below $p_a$ is vertex $q_a=(-1+\epsilon, -1+2\epsilon)$. Vertex $p_a$ is clean, so the next greedy step takes us to $q_a$ (blue edge). The path from $p_1$ to $q_a$ has length $2-2\epsilon$. 

To the right of $q_a$ is a sequence of vertices at coordinates $(-1+2\epsilon, -1+2\epsilon),(-1+3\epsilon, -1+2\epsilon), ..., (1-3\epsilon, -1+2\epsilon)=p_b$. The path from $q_a$ to $p_b$ has length $2-4\epsilon$.

There is a sequence of vertices directly above $p_b$ at coordinates $(1-3\epsilon, -1+3\epsilon), (1-3\epsilon, -1+4\epsilon), ... ,(1-3\epsilon, -\epsilon)=p_c$, all of which are clean. Thus we proceed along these vertices in a sequence of greedy steps from $p_b$ to $p_c$ (blue dashed path). From $p_c$ we take a greedy step to $q_c=(1-3\epsilon,1-4\epsilon)$ (blue edge). The path from $p_b$ to $q_c$ has length $2-6\epsilon$. 

There is a sequence of vertices left of $q_c$ at coordinates $(1-4\epsilon,1-4\epsilon),(1-5\epsilon,1-4\epsilon),...,(-1+5\epsilon,1-4\epsilon)=p_d$. The path from $q_c$ to $p_d$ has length $2-8\epsilon$. 

In Figure \ref{lowerbound2} we take a greedy step from $p_d$ to $q_d=(-1+5\epsilon, -1 +5.5\epsilon)$ (blue edge). This edge has length $2-9.5\epsilon$. 

To the right of $q_d$ is a sequence of vertices at coordinates $(-1+6\epsilon, -1 +5.5\epsilon),(-1+7\epsilon, -1 +5.5\epsilon),...,(1-6\epsilon, -1 +5.5\epsilon)=p_e$. The path from $q_d$ to $p_e$ has length $2-11.5\epsilon$. 

In Figure \ref{lowerbound3} we take a greedy step from $p_e$ to $q_e=(1-6.5\epsilon, 1-7\epsilon)$ (blue edge). The edge $(p_e,q_e)$ has length $2 - 12.5\epsilon$. 

In Figs. \ref{lowerbound4} and \ref{lowerbound5} there are a sequence of vertices at $(1-7\epsilon, 1-7\epsilon),(1-7\epsilon+\epsilon', 1-7.5\epsilon),(1-7.5\epsilon, 1-7.5\epsilon),(1-7.5\epsilon+\epsilon', 1-8\epsilon)...(0,0)=t$. We will define $\epsilon'$ in a moment. A sequence of clearing steps takes us from $q_e$ to $t$ along these vertices. Let $\delta+1$ be the number of horizontal edges in this sequence, and let $\gamma$ be the number of edges with a vertical component.  Let $\epsilon'=\epsilon/\delta$. Observe that $d_x(q_e,t)= 1-6.5\epsilon = (\delta+1)\cdot\epsilon/2$. The first horizontal edge has length $\epsilon/2$, and the remaining $\delta$ horizontal edges have length $\epsilon/2 - \epsilon'$. Thus the total length of the horizontal edges is $1-6.5\epsilon - \delta\epsilon' = 1-6.5\epsilon - \epsilon = 1-7.5\epsilon$. 

Observe that $d_y(q_e,t) = 1-7\epsilon = \gamma\cdot\epsilon/2$. Each of the $\gamma$ vertical edges has length $>\epsilon/2$, since their vertical distance is $\epsilon/2$ and they are skewed from vertical, thus the total length of the vertical steps is at least $1-7\epsilon$. Thus the path from $q_e$ to $t$ has length at least $2-14.5\epsilon$. 
     
The total length of these paths is at least $L_2(s,p_1)+L_2(p_1,q_a) +L_2(q_a, p_b) +L_2(p_b,q_c)+L_2(q_c,p_d)+L_2(p_d,q_d)+L_2(q_d,p_e)+L_2(p_e,q_e)+ d_x(q_e,t)-\epsilon +d_y(q_e,t) = 1-\epsilon+
  2-2\epsilon +
  2-4\epsilon +
  2-6\epsilon +
  2-8\epsilon +
  2-9.5\epsilon +
  2-11.5\epsilon +
  2-12.5\epsilon +
  2-14.5\epsilon = 17 - 69\epsilon$. Since $L_2(s,t)=1$, by letting $\epsilon$ tend to $0$ we can make the path arbitrarily close to $17\cdot L_2(s,t)$. This gives us the following theorem.\\
  
  \begin{theorem}
  	There exists a set of points such that the distance travelled by Algorithm \ref{algo} is at least $17-\epsilon$ for any $\epsilon>0$.
  \end{theorem}
\begin{figure}
	\centering
 \includegraphics[width = 8.4cm]{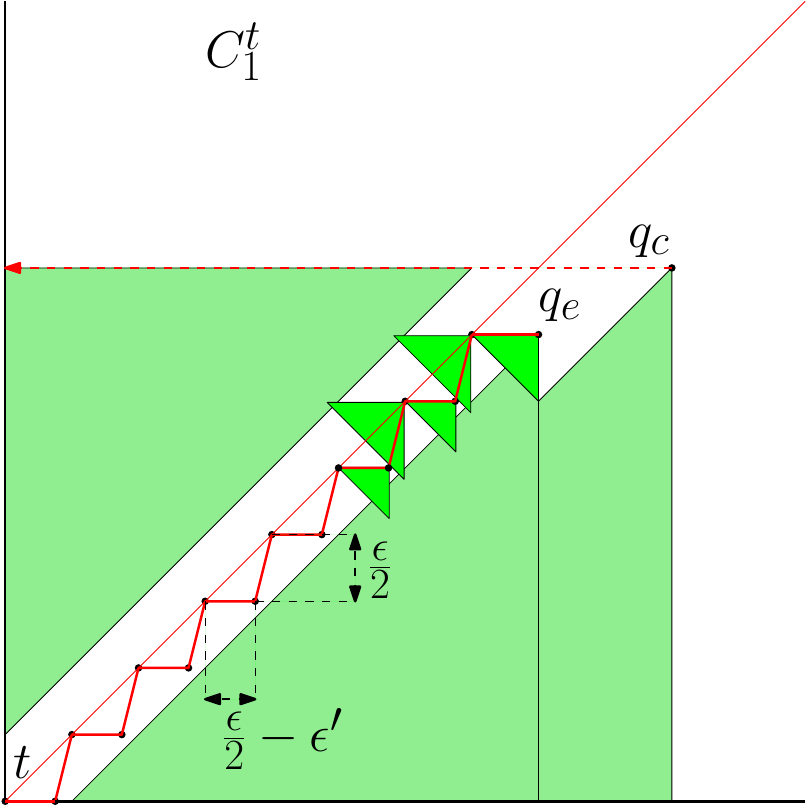}\caption{Details of the final path from $q_e$ to $t$.}\label{lowerbound5}
\end{figure}
\section{Removing the Diagonal-Bit}\label{four.five}
 
 The algorithm, as presented in Section~\ref{three} uses one single bit to remember the diagonal of the destination $t$ that is closest to the start vertex $s$. In this section, we show that without this single bit, the routing ratio increases to $\sqrt{290} < 17.03$.
 
 Our modification of Algorithm~\ref{algo} is to ``hard code" the diagonal we route with respect to into our main and helper functions. For instance, the helper algorithms $Clean(u,t,N(u))$ and $Sweep(u,t,N(u))$ no longer require the $1$-bit $d$ as input. Instead, they always make their decisions with respect to $\ell_t^-$ regardless of the position of $s$. The routing algorithm $Greedy/Sweep(u,t,N(u))$ is now memoryless. It does not have a parameter $d$ and it returns only a vertex $v \in N(u)$.  The changes in the analysis are in Inequality~\eqref{line3}, which becomes
 
 \begin{equation}
 L_1(s,\ov{p}_1)\leq d_y(s,t) + L_\infty(s,t), \label{line4}
 \end{equation}
 
 and in ~\eqref{line2}, which becomes
 \begin{equation}
 L_1(p_1',t)\leq 2\cdot L_\infty(s,t). \label{line5}
 \end{equation}
 
 If we replace \eqref{line4} and \eqref{line5} by \eqref{line3} and \eqref{line2} respectively in our proof of Theorem \ref{maintheorem} we get $L_2(\path{s,t})\leq d_y(s,t) + 17\cdot L_\infty(s,t)$. Let $\gamma= (d_y(s,t) + 17\cdot L_\infty(s,t))/L_2(s,t)$. The routing ratio is thus the maximum of $\gamma$. Let $u$ be the point at $(d_x(s),d_y(t))$, and let $\theta = \angle uts$. We can rewrite $\gamma$ as $\sin\theta + 17\cdot \cos\theta = \sqrt{17^2 + 1^2}\cdot\sin(\theta + \arctan(17))$ for $0 \leq \theta \leq \pi/4$. This is maximized at $\theta = \arctan(\frac{1}{17})$ with a value of $\sqrt{290}$. Thus we have the following theorem.
 
 \begin{theorem}
  With no bits of memory, and using a fixed diagonal $\ell_t^-$, Algorithm \ref{algo} outputs a path from $s$ to $t$ with length at most $\sqrt{290}\cdot L_2(s,t)$.
 \end{theorem}

 If we refer to the lower bound proof in Section~\ref{five}, we can adjust it to this new bound by moving $s$ to the right until $st$ forms an angle of $\arctan(\frac{1}{17})$ with the positive $y$-axis. Thus, in this case, we can get arbitrarily close to $\sqrt{290}$.
    
\section{Conclusion}\label{six}

We have presented a simple online local routing algorithm for $\tfour$-graphs that achieves a routing ratio of $17$ using knowledge of the destination and one bit of information, and $\sqrt{290}<17.03$ using only knowledge of the destination. 
Although we have presented the first such algorithm on $\tfour$-graphs and also improved the spanning ratio of \tfour- and $\Theta_4$-graphs from 237 down to 17, we conjecture that this upper bound both on the routing ratio and spanning ratio is not tight. Given that  7~\cite{ch2barbat4} is the best known lower bound for the spanning ratio of $\Theta_4$, the actual spanning ratio remains unknown.

\bibliographystyle{plain}
\bibliography{newbib}

\end{document}